\newif\iflncs
\begin{document}

\iflncs

\else
	\theoremstyle{plain}
	\newtheorem{theorem}{Theorem}[section]
	\newtheorem{corollary}{Corollary}[section]
	\newtheorem{lemma}{Lemma}[section]
	\theoremstyle{definition}
	\newtheorem{definition}[theorem]{Definition}
	\newtheorem{claim}{Claim}[section]
	\newtheorem{proposition}{Proposition}[section]

\fi

\renewcommand{\proofname}{{\bf Proof}}


\iflncs
  \renewcommand\paragraph[1]{\medskip \noindent {\bf #1}} 
\fi

\newenvironment{proofof}[1]{\smallskip
\noindent {\bf Proof of #1.  }}{\hfill$\Box$
\smallskip}

\newenvironment{reminder}[1]{\smallskip
\noindent {\bf Reminder of #1  }\em}{
\smallskip}

\def\tsumprob{\textsc{-}\textsc{TargetSum}}

\def\mld{\textsc{-}\textsc{MinLinDependence}}
\def\ldprob{\textsc{-LinDependence}}
\def\Z{\mathbb{Z}}

\def\mathspan{\mathsf{span}}

\def\deq{\mathrel{:=}}
\def \iouseful {\text{useful}}
\def \BP {{\sf BP}}
\def \coRP {{\sf coRP}}
\def \EXACT {{\sf EXACT}}
\def \SYM {{\sf SYM}}
\def \SAC {{\sf SAC}}
\def \SUBEXP {{\sf SUBEXP}}
\def \ZPSUBEXP {{\sf ZPSUBEXP}}
\def \SYMACC {{\sf SYM\text{-}ACC} }
\def \QED {{\hfill$\Box$}}
\def \PH {{\sf PH}}
\def \RP {{\sf RP}}
\def \coMA {{\sf coMA}}
\def \ZPTISP {{\sf ZPTISP}}
\def \REXP {{\sf REXP}}
\def \coNP {{\sf coNP}}
\def \BPP {{\sf BPP}}
\def \NC {{\sf NC}}
\def \ZPE {{\sf ZPE}}
\def \NE {{\sf NE}}
\def \E {{\sf E}}
\def \poly { \text{\rm poly} }
\def \TC {{\sf TC}}
\def \DTS {{\sf DTS}}
\def \R {{\cal R}}
\def \Z {{\mathbb Z}}
\def \P {{\sf P}}
\def \MA {{\sf MA}}
\def \AM {{\sf AM}}
\def \MATIME {{\sf MATIME}}
\def \QP {{\sf QP}}
\def \coNQP {{\sf coNQP}}
\def \NP {{\sf NP}}
\def \EXP {{\sf EXP}}
\def \NTISP {{\sf NTISP}}
\def \DTISP {{\sf DTISP}}
\def \TISP {{\sf TISP}}
\def \T {{\sf TIME}}
\def \TIME {\T}
\def \Sig[#1] {{\sf \Sigma}_{#1} }
\def \Pie[#1] {{\sf \Pi}_{#1} }
\def \NTS {{\sf NTS}}
\def \NSP {{\sf NSPACE}}
\def \NSPACE {{\sf NSPACE}}
\def \BPACC {{\sf BPACC}}
\def \ACC {{\sf ACC}}
\def \ASP {{\sf ASPACE}}
\def \io {\textrm{\it io-}}
\def \ro {\textrm{\it ro-}}
\def \ATISP {{\sf ATISP}}
\def \SIZE {{\sf SIZE}}
\def \AT {{\sf ATIME}}
\def \AC {{\sf AC}}
\def \BPTIME {{\sf BPTIME}}
\def \SPACE {{\sf SPACE}}
\def \RE {{\sf RE}}
\def \ZPSUBEXP {{\sf ZPSUBEXP}}
\def \coREXP {{\sf coREXP}}
\def \NQL {{\sf NQL}}
\def \QL {{\sf QL}}
\def \RTIME {{\sf RTIME}}
\def \NSUBEXP {{\sf NSUBEXP}}
\def \MAEXP {{\sf MAEXP}}
\def \PP {{\sf PP}}
\def \PSPACE {{\sf PSPACE}}
\def \NT {{\sf NTIME}}

\def \NTIME {\NT}

\def \ATIME {\AT}

\def \SAT {\mathsf{SAT}}

\def \NTIBI {{\sf NTIBI}}

\def \TIBI {{\sf TIBI}}

\def \QBFk {{\text{QBF}_k}}

\def \coNT {{\sf coNTIME}}

\def \coNTISP {{\sf coNTISP}}

\def \coNTIME {\coNT}

\def \coNTIBI {{\sf coNTIBI}}
\def \MOD {{\sf MOD}}
\def \ZPEXP {{\sf ZPEXP}}
\def \ZPTIME {{\sf ZPTIME}}
\def \ZPP {{\sf ZPP}}
\def \DT {{\sf DTIME}}
\def \coNE {{\sf coNE}}
\def \DTIME {\DT}

\def \isin {\subseteq}

\def \isnotin {\nsubseteq}

\def \L {{\sf LOGSPACE}}

\def \K {{\mathbb K}}
\def \F {{\mathbb F}}

\def \LOGSPACE {\L}

\def \N {{\mathbb N}}
\def \NQP {{\sf NQP}}
\def \NEXP {{\sf NEXP}}

\def \coNEXP {{\sf coNEXP}}

\def \SIZE {{\sf SIZE}}

\def \eps {\varepsilon}

\def \kSUMq {{k\mathrm{SUM}_q}}

\def \kVectorSUM{k\text{-Vector-Sum}}

\def\pr{\mathrm{Pr}}

\def \minksum {{\text{Closest-}}\kSUM}
\newcommand{\cftwo}[1]{\left\lceil \frac{#1}{2} \right\rceil}

\newcommand{\ints}[2]{[#1,#2]}

\def\ksumfree{\textrm{$k$-sum-free}}

\newcommand{\Kevin}[1]
{\begin{center} \framebox{ \parbox{ 15cm }
{\textcolor[rgb]{0.1,0.3,0.7}{{\bf Kevin:} #1}}} \end{center}}

\newcommand{\Amir}[1]
{\begin{center} \framebox{ \parbox{ 15cm }
{\textcolor[rgb]{0.5, 0, 1}{{\bf Amir:} #1}}} \end{center}}

\newcommand{\aanote}[1]
{\begin{center} \framebox{ \parbox{ 15cm }
{\textcolor[rgb]{1, 0, 0}{{\bf Amir:} #1}}} \end{center}}

\def\th{{\text{\rm th}}}

\newcommand{\ksumname}[3]{{(#1,#3)}\textrm{-}\mathtt{SUM}}
\newcommand{\ksumnew}[2]{{(#1,#2)}\textrm{-}\mathtt{SUM}}
\newcommand{\kvecsumname}[4]{{(#1,#3)}
	\textrm{-}\mathtt{VECTOR}\textrm{-}\mathtt{SUM}}
\newcommand{\kcliquename}[3]{{(#1,#2,#3)}\textrm{-}\mathtt{CLIQUE}}
\newcommand{\ewkcliquename}[4]{{(#1,#4)}
	\textrm{-}\mathtt{EW}\textrm{-}\mathtt{CLIQUE}}
\newcommand{\ewkcliquenameshort}[1]{{#1}
	\textrm{-}\mathtt{EW}\textrm{-}\mathtt{CLIQUE}}
\newcommand{\nwkcliquename}[4]{{(#1,#4)}
  \textrm{-}\mathtt{NW}\textrm{-}\allowbreak\mathtt{CLIQUE}}
\newcommand{\nwkcliquenameshort}[1]{{#1}
	\textrm{-}\mathtt{NW}\textrm{-}\mathtt{CLIQUE}}
\newcommand{\nwkDSname}[3]{{(#1,#2,#3,)}\textrm{-}\mathtt{DOMINATING-SET}} 

\def\kClique{k\text{-Clique}}
\def\kSUM{k\text{-SUM}}
\def\kSUMZ{k\text{-SUM-}{\mathbb Z}}
\def\W{{\sf W}}
\def\FPT {{\sf FPT}}
\def\subsetsum{\textrm{Subset-SUM}}
\def\eewc{\textrm{Exact Edge-Weight Clique}}
\def\maxclique{\textrm{Max-Clique}}
\def\kDS{k\text{-}\allowbreak\text{Dominating Set}}
\def\patrascu{\text{P{\v a}tra{\c s}cu}}

\def\calD{\mathcal{D}}

\newcommand{\evec}[2]{#1[#2]} 
\newcommand{\ivec}[2]{#1_{#2}} 

\newcommand{\ievec}[3]{\evec{\ivec{#1}{#2}}{#3}}

\newcommand{\vid}[2]{\evec{\boldv_{#1}}{#2}}
\newcommand{\xid}[2]{\evec{\boldv_{x_{#1}}}{#2}}
\newcommand{\xind}[1]{{x_{#1}}}
\newcommand{\boldf}{\ensuremath{\mathbf{f}}}
\newcommand{\boldx}{\ensuremath{\mathbf{x}}}
\newcommand{\boldr}{\ensuremath{\mathbf{r}}}
\newcommand{\boldw}{\ensuremath{\mathbf{w}}}
\newcommand{\boldt}{\ensuremath{\mathbf{t}}}
\newcommand{\boldu}{\ensuremath{\mathbf{u}}}
\newcommand{\boldv}{\ensuremath{\mathbf{v}}}
\newcommand{\boldy}{\ensuremath{\mathbf{y}}}
\newcommand{\boldz}{\ensuremath{\mathbf{z}}}
\newcommand{\vx}[1]{\boldv({#1})}
\newcommand{\ux}[1]{\boldu({#1})}
\def\zerov{\boldzero}

\newcommand{\vsubx}[1]{\boldv_{#1}}

\def\boldgamma{\ensuremath{\boldsymbol{\gamma}}}
\def\boldtau{\ensuremath{\boldsymbol{\tau}}}

\def\carryvec{\boldgamma}

\def\mappingchar{f}
\newcommand{\mapping}[3]{\mappingchar_{#2}(#1,#3)}

\def\SmallkSUM{\text{Small-}k\text{-SUM}}

\def\boldzero{\ensuremath{\boldsymbol{0}}}
\newcommand{\indvec}[2]{\ensuremath{#2 \cdot e_{#1}}}
\def\func{\boldsymbol{\eta}}
\newcommand{\funci}[1]{\func_{f_{#1}}}

\iflncs

\else
\title{Losing Weight by Gaining Edges}
\author{Amir Abboud\thanks{Computer Science Department, Stanford 
University}\\{\tt abboud@cs.stanford.edu} \and Kevin Lewi\footnotemark[1]\\{\tt 
klewi@cs.stanford.edu} \and Ryan Williams\footnotemark[1]
\footnote{Supported in part by a David Morgenthaler II Faculty Fellowship, and 
NSF CCF-1212372.}\\{\tt rrw@cs.stanford.edu}}
\date{}
\fi

\iflncs
\title{Losing Weight by Gaining Edges}
\author{
Amir Abboud \and
Kevin Lewi\and
Ryan Williams}
\institute{Stanford University}
\fi

\maketitle

\begin{abstract}

We present a new way to encode weighted sums into unweighted pairwise 
constraints, obtaining the following results.

\begin{itemize}

\item Define the $\kSUM$ problem to be: given $n$ integers in $[-n^{2k},n^{2k}]$ are there $k$ which sum to zero? 
(It is well known that the same problem over \emph{arbitrary} integers is equivalent to the above definition, by linear-time randomized reductions.) We prove that this definition of $\kSUM$ remains $\W[1]$-hard, and is in fact $\W[1]$-\emph{complete}: $\kSUM$ can be reduced to $f(k)\cdot n^{o(1)}$ instances of $\kClique$.

\item The maximum node-weighted $k$-Clique and node-weighted $k$-dominating set problems can be reduced to $n^{o(1)}$ instances of the \emph{unweighted} $k$-Clique and $k$-dominating set problems, respectively. This implies a strong \emph{equivalence} between the time complexities of the node weighted problems and the unweighted problems: any polynomial improvement on one would imply an improvement for the other.

\item A triangle of weight $0$ in a node weighted graph with $m$ edges can be deterministically found in $m^{1.41}$ time. 
\end{itemize}

%
\end{abstract}


\thispagestyle{empty}

\setcounter{page}{1}

\section{Introduction}


One of the most basic problems over integers, studied in geometry, cryptography, and combinatorics, is $k$-SUM, the parameterized version of the classical NP-complete problem SUBSET-SUM.

\begin{definition}[$\kSUM$] The $\ksumnew{k}{M}$ problem is to determine, given 
  $n$ integers $x_1,\ldots,x_n \in [0,M]$ and a target integer $t\in [0,M]$, if 
  there exists a subset $S \subseteq [n]$ of size $|S|=k$ such that $\sum_{i\in 
  S} x_i = t$.\footnote{Without loss of generality, the range of integers can be 
  $[-M,M]$ and the target integer can be zero.}
	 We define $\kSUM \triangleq 
	\ksumnew{k}{n^{2k}}$.

\end{definition}

Our definition of $\kSUM$ is justified via the following known proposition:

\begin{proposition} \label{random-prime-prop}
	Every instance $S$ of $\ksumnew{k}{M}$ can be randomly reduced in $O(k n \log 
	M)$ time to an instance $S'$ of $\kSUM$ as defined above.
\end{proposition}

That is, there is an efficient randomized reduction from $\kSUM$ over arbitrary 
integers, which we call $\kSUMZ$, to our definition of $\kSUM$ 
(Appendix~\ref{reducing}).
Furthermore, we show in Appendix~\ref{reducing}
that this reduction can be made deterministic under standard hardness 
assumptions.

%

A classical meet-in-the-middle algorithm solves $\kSUM$ in $\tilde{O}(n^{\lceil k/2 \rceil})$ time and it has been a longstanding open problem to obtain an $O(n^{\lceil k/2 \rceil-\eps})$ algorithm for any integer $k\geq 3$ and constant $\eps>0$.
Logarithmic improvements are known for the $k=3$ case~\cite{BDP08,GP14} (that is, the famous $3$-SUM problem). 
The $k$-SUM conjecture~\cite{Patrascu10,AL13} states that $k$-SUM requires $n^{\lceil{k/2}\rceil-o(1)}$ time and is known to imply tight lower bounds for many problems in computational geometry~\cite{GO95,Barrera96,BH01} (and many more) and has recently been used to show conditional lower bounds for discrete problems as well~\cite{Patrascu10,VW09,JV13,AV14}.
A matching $\Omega(n^{\lceil k/2 \rceil})$ lower bound for $\kSUM$ was shown for a restricted model of computation called $k$-linear decision trees (LDTs)~\cite{Erickson95,AC05}, although it was recently shown that depth $O(n^{k/2}\sqrt{\log{n}})$ suffices for $(2k-2)$-LDTs~\cite{GP14}.
It is also known that if there is an unbounded function $s:\mathbb{N} \rightarrow \mathbb{N}$ such that for infinitely many $k$, $\kSUM$ is in $n^{k/s(k)}$, then the Exponential Time Hypothesis is false~\cite{PW10}.

Despite intensive research on this simple problem, our understanding is still lacking in many ways, one of which is from the viewpoint of parameterized complexity. In their seminal work on parameterized 
intractability, Downey and Fellows~\cite{DF92,DF95} proved that $\kSUMZ$ is $\W[1]$-hard and is contained in $\W[P]$. 
The even simpler Perfect Code problem was conjectured to lie between the classes $\W[1]$ and $\W[2]$~\cite{DF95} until Cesati proved it was $\W[1]$-complete in 2002~\cite{Ces02}.
Classifying $\kSUMZ$ within a finite level of the $\W$-hierarchy was open until in 2007, when Buss and 
Islam~\cite{BI07} proved that $\kSUMZ \in \W[3]$.


The primary contribution of this work is a novel and generic way to efficiently 
convert problems concerning sums of numbers into problems on unweighted pairwise 
constraints. We call this technique ``Losing weight by gaining edges" and report 
several interesting applications of it. One application is a new parameterized 
reduction from $\kSUM$ to $\kClique$ and therefore the resolution of the 
parameterized complexity of $\kSUM$ (for numbers in $[-n^{2k},n^{2k}]$). Under 
standard lower bound hypotheses, we also obtain a deterministic reduction from 
$\kSUMZ$ to $\kClique$ as well.

\begin{theorem}
\label{thm:w1}
$\kSUM$ is $\W[1]$-complete.
\end{theorem}


The significance of showing $\W[1]$-hardness for a problem is well known (as it rules out FPT algorithms). 
The significance of showing that a problem is in $\W[1]$ is less obvious, so let us provide some motivation.
First, although 
$\W[1]$-complete problems are probably not FPT, prominent problems in $\W[1]$ 
(such as $\kClique$) can still be solved substantially faster than exhaustive 
search over all ${n \choose k}$ subsets~\cite{NP85}. In contrast, analogous 
problems in $\W[2]$ (such as $k$-Dominating Set) do not have such algorithms 
unless CNF Satisfiability is in $2^{\delta n}$ time for some $\delta < 
1$~\cite{PW10}, which is a major open problem in exact algorithms. Therefore, 
understanding which parameterized problems lie in $\W[1]$ is closely related to 
understanding which problems can be solved faster than exhaustive search. 
Second, showing that a 
problem is in $\W[1]$ rather than $\W[3]$ means that it can be expressed in an 
apparently weaker logic than before, with fewer quantifiers~\cite{FG06}. That 
is, putting a problem in $\W[1]$ decreases the descriptive complexity of the 
problem.


Theorem~\ref{thm:w1} has applications to parameterized complexity, yielding a 
new characterization of the class $\W[1]$ as the problems FPT-reducible to 
$\kSUM$. Since $\kSUM$ is quite different in nature from the previously known 
$\W[1]$-complete problems, we are able to put other such ``intermediate" 
problems in $\W[1]$, including weighted graph problems and problems with 
application to coding theory such as Weight Distribution~\cite{CP12}.

To show that $\kSUM \in \W[1]$, we prove a very tight reduction from $\kSUM$ to $\kClique$.
Given an instance of $\kSUM$ on $n$ numbers, we generate $f(k)\cdot n^{o(1)}$ instances of $\kClique$ on $n$ node graphs, such that one of these graphs contains a $k$-clique if and only if our $\kSUM$ instance has a solution.
This implies that any algorithm for $\kClique$ running in time $O(n^c)$ for some $c\geq2$ yields an algorithm for $\kSUM$ running in time $n^{c+o(1)}$. 
Hence, the $k$-SUM conjecture implies an $n^{\lceil k/2 \rceil - o(1)}$ lower bound for $\kClique$ as well.

Generalizing our ideas further, we are able to prove surprising consequences regarding other weighted problems.

\paragraph{Removing node weights.} Two fundamental graph problems are $\kClique$ and $\kDS$. Natural extensions of these problems allow the input graph to have weights on its nodes. The problem can then be to find a $k$-clique or a $k$-dominating set of minimum or maximum sum of node weights (the {\em min} and {\em max} versions), or to find a $k$-clique or a $k$-dominating set with total weight exactly $0$ (the {\em exact} version, defined below). 

\begin{definition}[The Node-Weight $k$-Clique-Sum Problem]
	\label{def3}
	For integers $k,M > 0$, the 
	$\nwkcliquename{k}{n}{m}{M}$ problem is to determine, given a graph $G$, a node-weight 
	function $w : V(G) \to \ints{0}{M}$, and a target weight $t \in \ints{0}{M}$, 
	if there is a set $S$ of $k$ nodes which form a clique such that $\sum_{v \in 
	S} w(v) = t$. We define the Node-Weight $k$-Clique-Sum problem as $
	\nwkcliquename{k}{n}{m}{n^{2k}}$.
\end{definition}

\begin{definition}[Node-Weight $k$-Dominating-Set-Sum]
For an integer $k> 0$, the Node-Weight $k$-Dominating-Set-Sum problem is to determine, given a graph $G$, a node-weight 
	function $w : V(G) \to \ints{0}{n^{2k}}$, and a target weight $t \in \ints{0}{n^{2k}}$, 
	if there is a set $S$ of $k$ nodes which form a dominating set such that $\sum_{v \in 
	S} w(v) = t$.  
\end{definition}

These additional node weights increase the expressibility of the problem and allow us to capture more applications. How much harder are these node weighted versions compared to the unweighted versions?
By weight scaling arguments, one can show that the ``exact" version is harder than the max and min versions, in the sense that any algorithm for ``exact" implies an algorithm for max or min with only a logarithmic overhead \cite{NLZ12} (and Theorem 3.3 in \cite{VW09}). But how much harder is (for example) Node-Weight $k$-Clique-Sum than the case where there are no weights at all?

For $k$ divisible by $3$, the best $\kClique$ algorithms reduce the problem to $3$-Clique on $n^{k/3}$ nodes, then use an $O(n^{\omega})$ time algorithm for triangle detection~\cite{IR77} for a running time of $O(n^{\omega k /3})$~\cite{NP85}.
This reduction to the $k=3$ case works for the node weighted case as well; combined with the recent $n^{\omega+o(1)}$ algorithms for node weighted triangle ($3$-clique) problems \cite{CL09,VW09}, we obtain $n^{\omega k /3+o(1)}$ running times for node weighted $k$-clique problems.
The best $\kDS$ algorithms reduce the problem to a rectangular matrix multiplication of matrices of dimensions $n^{k/2}\times n$ and $n \times n^{k/2}$ and run in time $n^{k+o(1)}$~\cite{EG04}. These algorithms allow us to find all $k$-dominating sets in the graph and therefore can also solve the node weighted versions without extra cost.

Therefore, the state of the art algorithms for $\kClique$ and $\kDS$ suggest that adding node weights does not make the problems much harder. Is that due to our current algorithms for the unweighted problems, or is there a deeper connection? 
Using the ``Losing weight by gaining edges" ideas, we show that the node weighted versions of $\kClique$ and $\kDS$ (and, in fact, {\em any} problem that allows us to implement certain ``pairwise constraints") are essentially ``equivalent" to the unweighted versions.

\begin{theorem}
\label{thm:NW}
If $\kClique$ on $n$ node and $m$ edge graphs can be solved in time $T(n,m,k)$, 
then Node-Weight $k$-Clique-Sum on $n$ node and $m$ edge graphs can be solved in 
time $n^{o(1)} \cdot T(kn, k^2 m,k)$.
If $\kDS$ on $n$ node graphs can be solved in time $T(n,k)$, then Node-Weight 
$k$-Dominating-Set on $n$ node graphs can be solved in time $n^{o(1)} \cdot 
T(k^2n,k)$.
\end{theorem}

Interestingly, Theorem~\ref{thm:NW} yields a short and simple $n^{\omega+o(1)}$ algorithm for the node-weighted triangle problems, while a series of papers were required to recently conclude the same upper bound using different techniques \cite{VW06,VWY06,CL09,VW09}. Moreover, unlike the previous techniques, our approach extends to $k>3$ and applies to more problems like $\kDS$.

Applying the result of Theorem~\ref{thm:NW} to the 
$O(m^{\frac{2\omega}{\omega+1}})$ triangle detection algorithm of Alon, Yuster 
and Zwick~\cite{AYZ97}, we obtain a \emph{deterministic} algorithm for 
Node-Weight Triangle-Sum in sparse graphs, improving the previous 
$n^{\omega+o(1)}$ upper bound~\cite{VW09} and matching the running time of the 
best randomized algorithm~\cite{VW09}.

\begin{corollary}
Node-Weight Triangle-Sum can be solved deterministically in $m^{1.41+o(1)}$ time.
\end{corollary}

\subsection{Overview of the Proofs}

Let us give some intuition for Theorem~\ref{thm:w1}. Both the containment 
in $\W[1]$ and the hardness for $\W[1]$ require new technical ideas. Downey and 
Fellows~\cite{DF92,DF95} proved that $\kSUMZ$ is $\W[1]$-hard by a reduction 
requiring fairly large numbers: they are exponential in $n$, but can still be 
generated in an FPT way. To prove that $\kSUM$ is $\W[1]$-hard even when the 
numbers are only exponential in $k \log n$, we need a much more efficient encoding of $k$-Clique instances. We apply some machinery from 
additive combinatorics, namely a construction of large sets of integers avoiding 
trivial solutions to the linear equation $\sum_{i=1}^{k-1} x_i = 
(k-1)x_k$~\cite{OBryant11}. 
These sets allow us to efficiently ``pack'' a $k$-Clique instance into a $({k 
\choose 2}+k)$-SUM instance on small numbers.

Proving that $\kSUM$ is in $\W[1]$ takes several technical steps. We 
provide a parameterized reduction from $\kSUM$ on $n$ numbers to only $f(k)\cdot n^{o(1)}$ graphs on $O(kn)$ nodes, such that some graph has a $k$-clique if 
and only if the original $n$ numbers have a $\kSUM$. To efficiently reduce from 
numbers to graphs, we first reduce the numbers to an analogous problem on 
vectors. We define an intermediate problem $\kvecsumname{k}{n}{M}{d}$, in which 
one is given a list of $n$ vectors from $\{-kM,\ldots,0,\ldots,kM\}^d$, and is 
asked to determine if there are $k$ vectors which sum to the all-zero vector. We 
give an FPT reduction from $\kSUM$ to $\kvecsumname{k}{n}{M}{d}$ where $M$ and 
$d$ are ``small'' (such that $M^d$ is approximately equal to the original 
weights of the $\kSUM$ instance). Next, we ``push'' the weights in these vectors 
onto the edges of a graph connecting the vectors, where the edge weights are 
much smaller than the original numbers: we reduce from 
$\kvecsumname{k}{n}{M}{d}$ to edge-weighted $k$-clique-sum using a 
polynomial ``squaring trick'' which creates a graph with ``small" edge weights, 
closely related in size to $M$. Finally, we reduce from the weighted problem to 
the unweighted version of the problem by brute-forcing all feasible weight 
combinations on the edges; as the edge weights are small, this creates 
$f(k)\cdot n^{o(1)}$ unweighted $k$-Clique instances for some function $f$.


Combining all these steps into one, one can view our approach as follows.
We enumerate over all ${k \choose 2}$-tuples of numbers $t=(\alpha_{i,j})_{i,j \in [k]}$ such that $\sum_{i,j} \alpha_{i,j} = 0$ where $\alpha_{i,j} \in [-M,M]$ for $M=f(k) \cdot \poly\log{n}$, and for each such tuple $t$ we generate an instance of the unweighted problem. In this instance, two nodes are allowed to both be a part of our final solution (e.g. there is an edge between them in the $k$-clique case) if and only if some expression on the weights of the objects $v_i$ and $v_j$ evaluates to $F(w(v_i),w(v_j))=\alpha_{i,j}$.
The formulas are defined, via the ``squaring trick", in such a way that there are $k$ nodes satisfying these ${k \choose 2}$  equations for some ${k \choose 2}$-tuple $t$ if and only if the sum of the weights of these $k$ nodes is $0$.

To implement our approach for $\kDS$ we follow similar steps, except that we cannot implement the constraints on having a certain pair of objects in our solution by removing the edge between them anymore, since this does not prevent them from being in a feasible $k$-dominating set.
This can be done, however, by adding extra nodes $X$ to the graph such that the inclusion of pairs of nodes $v_i,v_j$ in the solution $S$ that do not satisfy our equations, $F(w(v_i),w(v_j)) \neq \alpha_{i,j}$, will prevent $S$ from dominating all the nodes in $X$.

\subsection{Related Work}

 There has been recent 
work in relating the complexity of $\kSUM$ and variations of $\kClique$ for the 
specific case of $k=3$. $\patrascu$~\cite{Patrascu10} shows a tight reduction from 
$3$-SUM to listing $3$-cliques; a reduction from listing $3$-cliques to $3$-SUM 
is given by Jafargholi and Viola~\cite{JV13}. Vassilevska and 
Williams~\cite{VW09} consider the exact edge-weight $3$-clique problem and give 
a tight reduction from $3$-SUM. For the case of $k>3$, less is known, as the 
techniques used for the case of $k=3$ do not seem to generalize easily. Abboud 
and Lewi~\cite{AL13} give reductions between $\kSUM$ and various exact 
edge-weighted graph problems where the goal is to find an instance of a specific 
subgraph whose edge weights sum to $0$.

\section{Preliminaries} \label{prelims}

For $i < j \in \Z$, define $\ints{i}{j} \triangleq \{ i, \ldots, j\}$. As 
shorthand, we define $[n] \triangleq \ints{1}{n}$. For a vector $\boldv \in 
\mathbb{Z}^d$, we denote by $\evec{\boldv}{j}$ the value in the $j^{\th}$ 
coordinate of $\boldv$. We let $\boldzero$ denote the all zeros vector. The 
default domain and range of a function is $\N$.



\medskip


We define the $\kClique$ problem as follows.

\begin{definition}[The $k$-Clique Problem]
	For integers $k,n,m > 0$, the $k$-clique problem is to determine, given a 
	graph $G$, if there is a size-$k$ subset $S \subseteq [n]$ such that $S$ is a 
	clique in $G$.
\end{definition}

The following problems are referred to in Corollary~\ref{cor:subsetsum}. They 
are simply the unparameterized versions of $\kSUM$ and Exact Edge-Weight 
$\kClique$, respectively.

\begin{definition}[The $\subsetsum$ Problem] The $\subsetsum$ problem is to 
	determine, given a set of integers $x_1,\ldots,x_n, t$, if there exists a 
	subset $S \subseteq [n]$ such that $\sum_{i \in S} x_i = t$.
\end{definition}

\begin{definition}[The Exact Edge-Weight Clique Problem]
	For integers $n,m,M > 0$, the Exact Edge-Weight $k$-Clique problem is to 
	determine, given an instance of a graph $G$ on $n$ vertices and $m$ edges, a 
	weight function $w : E(G) \to \ints{-M}{M}$, if there exists a set of nodes 
	which form a clique with total weight $0$.
\end{definition}

\section{From Numbers to Edges} \label{sec3}

Our results begin by showing how to reduce $\kSUM$ to $\kClique$. To do this, we 
first give a new reduction from $\kSUM$ to $\kVectorSUM$ on $n$ vectors in $C^d$ 
for a set $C$ which is relatively small compared to the numbers in the original 
instance. From $\kVectorSUM$, we give a reduction to Edge-Weight $k$-Clique-Sum 
with small weights.
Then, we can brute-force all possibilities for the $\binom{k}{2}$ edge weights 
for $\kSUM$ and reduce to the (unweighted) $\kClique$ problem. Altogether, we 
conclude that $\kSUM$ is in $\W[1]$.


\subsection{Reducing $\kSUM$ to $\kVectorSUM$}

We present a generic way to map numbers into vectors over small numbers such 
that the $k$-sums are preserved. We define the $k$-Vector-Sum problem as 
follows.

\begin{definition}[The $k$-Vector-Sum Problem]
	\label{kvecsum-def}
	For integers $k,n,M,d > 0$, the $k$-vector-sum problem 
	$\kvecsumname{k}{n}{M}{d}$ is to determine, given vectors $\boldv_1, \ldots, 
	\boldv_n, \boldt \in [0,kM]^d$, if there is a size-$k$ subset $S \subseteq 
	[n]$ such that $\sum_{i \in S} \boldv_i = \boldt$.
\end{definition}

Note that the problem was considered by Bhattacharyya et al.~\cite{BIWX11} and 
also by Cattaneo and Perdrix~\cite{CP12}.


\begin{lemma} \label{lem1}
	Let $k,p,d,s,M \in \N$ satisfy $k<p$, $p^d \geq  k M + 1$, and 
	$s=(k+1)^{d-1}$. There is a collection of mappings $\mappingchar_1, \ldots, 
	\mappingchar_s : [0,M] \times [0,kM] \to [-kp,kp]^d$, each computable in time 
	$O(\poly\log M + k^d)$, such that for all numbers $x_1,\ldots,x_k \in [0,M]$ 
	and targets $t\in [0,kM]$, \[ \sum_{j=1}^k x_j = t \qquad \Leftrightarrow 
	\qquad \exists\,i \in [s] \quad \text{such that} \quad \sum_{j=1}^k 
	\mapping{x_j}{i}{t} = \zerov. \]
\end{lemma}


\begin{proof}
	For $x \in [0,kM]$, define $\vsubx{x} \in [0,p-1]^d$ to be the representation 
	of $x$ in base $p$---that is, let $\evec{\vsubx{x}}{1}, \ldots, 
	\evec{\vsubx{x}}{d} \in [0,p-1]$ be unique integers such that $x = 
	\evec{\vsubx{x}}{1} \cdot p^0 + \cdots + \evec{\vsubx{x}}{d} \cdot p^{d-1}$. 
	Note that $p^d > kM$.
	
	
	For every $(d-1)$-tuple of numbers $\gamma = (c_1,\ldots,c_{d-1})\in 
	[0,k]^{d-1}$ and a number $t \in [0,kM]$, define the vector $\boldtau_\gamma$ 
	to be such that \[
	\evec{\boldtau_\gamma}{j} = \begin{cases}
		\evec{\vsubx{t}}{1} + c_1 \cdot p, &\text{if $j=1$} \\
		\evec{\vsubx{t}}{j} - c_{j-1} + c_{j} \cdot p, &\text{if $2\leq j \leq d-1$} 
		\\
		\evec{\vsubx{t}}{d} - c_{d-1} , &\text{if $j=d$} \\
		\end{cases}
	\]
	Intuitively, each tuple $\gamma$ represents a possible choice of the 
	``carries'' obtained in each component when computing the sum of their 
	corresponding base-$p$ numbers, and $\boldtau_\gamma$ corresponds to the 
	target vector in this base-$p$ representation.
	Note that there are at most $s = (k+1)^{d-1}$ such vectors. We arbitrarily 
	number these vectors as $\boldt_1,\ldots,\boldt_s$, and for all $i \in [s]$ we 
	define the mapping \[ \mapping{x}{i}{t} = k\cdot \vsubx{x} -\boldt_i. \]

	Consider a set of $k$ numbers $S = \{ x_1,\ldots,x_k \} \subseteq [0,M]$. We 
	claim that $\sum_{j=1}^k x_j = t$ if and only if there exists some $i \in [s]$ 
	such that $\sum_{j=1}^k \mapping{x_j}{i}{t} = \zerov$. First observe that for 
	any vectors $\boldu_1, \ldots, \boldu_k,\boldt \in \Z^d$, $\sum_{j=1}^k 
	(k\cdot \boldu_j - \boldt ) = \zerov$ if and only if $\sum_{j=1}^k \boldu_j = 
	\boldt$. It suffices to show that $\sum_{j=1}^k x_j = t$ if and only if there 
	exists some $\gamma \in[0,k]^{d-1}$ such that $\sum_{j=1}^k \vsubx{x_j} = 
	\boldtau_\gamma$.
		
	For the first direction, $t = \sum_{i=1}^k x_i$. We can write this equation in 
	base-$p$ representation as
	\[
	\sum_{j=0}^{d-1} \evec{\vsubx{t}}{j+1} \cdot p^j = \sum_{i=1}^{k} \left( 
	\sum_{j=0}^{d-1} \xid{i}{j+1} \cdot p^j \right) = \sum_{j=0}^{d-1} \left( 
	\sum_{i=1}^{k} \xid{i}{j+1} \right) \cdot p^j.\]
	Therefore, $\sum_{i=1}^{k} \xid{i}{1} \in [0,k(p-1)] = \evec{\vsubx{t}}{1} + 
	c_1\cdot p$ for some $c_1 \in [0,k]$, $c_1
	+ \sum_{i=1}^{k} \xid{i}{2} = c_2 \cdot p + \evec{\vsubx{t}}{2}$ for some $c_2 
	\in [0,k]$, and so on. Finally, $c_{d-1} + \sum_{i=1}^{k} \xid{i}{d} = 
	\evec{\vsubx{t}}{d}$. Letting $\gamma = (c_1,\ldots,c_{d-1}) \in [0,k]^{d-1}$, 
	these equations imply exactly that $\sum_{i=1}^k \vsubx{x_i} = 
	\boldtau_\gamma$.
	
	For the other direction, suppose $\sum_{i=1}^k \vsubx{x_i} = \boldtau_\gamma$ 
	where $\gamma=(c_1,\ldots,c_{d-1}) \in [0,k]^{d-1}$. Again, converting from 
	base-$p$ representation we obtain $\sum_{i=1}^k x_i = \sum_{j=0}^{d-1} \left( 
	\sum_{i=1}^{k} \xid{i}{j+1} \right) \cdot p^j$, and using the definition of 
	$\boldtau_\gamma$ we can write the sum of the variables in $S$ as
	\begin{align}
		\sum_{i=1}^k x_i
		&= ( \evec{\vsubx{t}}{1} + c_1 \cdot p^1) + \left( \sum_{i=1}^{k} \xid{i}{1} 
		\right) \cdot p^1 + \cdots + \left( \sum_{i=1}^{k} \xid{i}{d} \right) \cdot 
		p^{d-1} \nonumber \\
		&=  \evec{\vsubx{t}}{1}  + ( \evec{\vsubx{t}}{2} + c_2 \cdot p^1) \cdot p^1 
		+ \cdots + \left( \sum_{i=1}^{k} \xid{i}{d} \right) \cdot p^{d-1} \nonumber 
		\\
		&= \cdots = \evec{\vsubx{t}}{1}  + \evec{\vsubx{t}}{2} \cdot p^1+\cdots+ 
		\evec{\vsubx{t}}{d} \cdot p^{d-1} = t, \nonumber
	\end{align}
	which completes the proof.
\end{proof}

\begin{corollary}
	Let $k,p,d,M,n > 0$ be integers with $k<p$ and $p^d \geq k M + 1$. $\kSUM$ on 
	$n$ integers in the range $[0,M]$ can be reduced to $O(k^d)$ instances of 
	$\kvecsumname{k}{n}{p-1}{d}$ on $n$ vectors in $[0,p-1]^d$.
\end{corollary}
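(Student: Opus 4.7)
The plan is to extract the intermediate characterization established inside the proof of Lemma~\ref{lem1} and apply it directly, avoiding the $\zerov$-target reformulation that Lemma~\ref{lem1} concludes with. In the course of proving Lemma~\ref{lem1} it is shown that for $x_1,\ldots,x_k \in [0,M]$ and $t \in [0,kM]$, the equality $\sum_{j=1}^k x_j = t$ holds if and only if there exists a carry tuple $\gamma = (c_1,\ldots,c_{d-1}) \in [0,k]^{d-1}$ such that $\sum_{j=1}^k \vsubx{x_j} = \boldt_\gamma$ in $\Z^d$, where $\vsubx{x}$ denotes the $d$-digit base-$p$ representation of $x$ and $\boldt_\gamma$ is defined by the explicit formula from Lemma~\ref{lem1}. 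This is precisely the form of a $\kVectorSUM$ instance whose vectors live in $[0,p-1]^d$.

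Given a $\kSUM$ instance with numbers $x_1,\ldots,x_n \in [0,M]$ and target $t \in [0,kM]$, first compute the base-$p$ representations $\vsubx{x_1},\ldots,\vsubx{x_n} \in [0,p-1]^d$ and $\vsubx{t}$ in time $O(n\cdot\poly\log M)$. Then enumerate all $(k+1)^{d-1}$ carry tuples $\gamma \in [0,k]^{d-1}$; for each, compute $\boldt_\gamma$ from the formula in Lemma~\ref{lem1}. If every coordinate of $\boldt_\gamma$ lies in $[0,k(p-1)]$, emit the $\kvecsumname{k}{n}{p-1}{d}$ instance with input vectors $\vsubx{x_1},\ldots,\vsubx{x_n}$ and target $\boldt_\gamma$; otherwise discard this $\gamma$, since the coordinate sum of any $k$ vectors from $[0,p-1]^d$ is forced to lie in $[0,k(p-1)]$. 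Declare the original $\kSUM$ instance to be a YES-instance iff at least one emitted $\kVectorSUM$ instance is a YES-instance.

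Correctness follows at once from the intermediate equivalence: any $\gamma$ arising from a genuine $\kSUM$ solution automatically produces a $\boldt_\gamma$ with coordinates in $[0,k(p-1)]$ (since its coordinates are actual sums of $k$ base-$p$ digits), so discarding out-of-range $\gamma$ never removes a true solution. The total number of emitted instances is at most $(k+1)^{d-1} = O(k^d)$, and each is produced in time $O(k^d + \poly\log M)$. There is no serious obstacle here; the only point requiring any care is to distinguish this vector-valued formulation (vectors in $[0,p-1]^d$, variable target $\boldt_\gamma$) from the shifted formulation ending Lemma~\ref{lem1} (vectors in $[-kp,kp]^d$, fixed target $\zerov$), which is why the argument must invoke the intermediate step of Lemma~\ref{lem1}'s proof rather than its final statement.
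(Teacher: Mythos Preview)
Your proposal is correct and is essentially what the paper intends: the corollary is stated without proof immediately after Lemma~\ref{lem1}, and your argument is exactly the natural unpacking---use the intermediate equivalence $\sum_j x_j = t \iff \exists\,\gamma:\ \sum_j \vsubx{x_j}=\boldt_\gamma$ from the proof of Lemma~\ref{lem1}, and emit one $\kvecsumname{k}{n}{p-1}{d}$ instance per carry tuple $\gamma$. Your observation that one must use this intermediate step (vectors in $[0,p-1]^d$, varying target $\boldt_\gamma$) rather than the final $\zerov$-target form of Lemma~\ref{lem1} (vectors in $[-kp,kp]^d$) is well taken, since the latter would only yield the weaker bound $\kvecsumname{k}{n}{O(kp)}{d}$; the range check on $\boldt_\gamma$ you include is also the right way to keep the emitted instances syntactically valid.
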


\subsection{Reducing to $k$-Clique}

Here, we consider a generalization of the $\kSUM$ problem---namely, the
Node-Weight $\kClique$-Sum problem. We give a reduction from Node-Weight 
$\kClique$-Sum to Edge-Weight $\kClique$-Sum (defined below), where the new edge 
weights are much smaller than the original node weights. We then show how to 
reduce to many instances of the unweighted version of the problem, where each 
instance corresponds to a possible setting of edge weights. Then, we give an 
application of this general reduction to the Node-Weight $\kClique$-Sum problem.

\begin{definition}[The Edge-Weight $k$-Clique-Sum Problem]
	For integers $k,M > 0$, the edge-weight $k$-clique-sum problem 
	$\ewkcliquename{k}{n}{m}{M}$ is to determine, given a graph $G$, an 
	edge-weight function $w : E(G) \to \ints{0}{M}$, and a target weight $t \in 
	\ints{0}{M}$, if there is a set $S$ of $k$ nodes which form a clique such that 
	$\sum_{(u,v) \in S} w(u,v) = t$.
\end{definition}

\begin{lemma} \label{lem:ewclique}
  Let $k,p,d,M > 0$ be integers such that $k<p$ and $p^d \geq k M + 1$, and let 
  $M' = O(k^3dp^2)$. The $\nwkcliquename{k}{n}{m}{M}$ problem can be 
  deterministically reduced to $O(k^d)$ instances of
	$\ewkcliquename{k}{n}{m}{M'}$ in time $O(k^d \cdot n^2 \cdot \poly\log M)$.
\end{lemma}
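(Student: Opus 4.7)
The plan is to compose Lemma~\ref{lem1} with an encoding that pushes the vector-valued node weights onto scalar edge weights. First, since the hypotheses $k<p$ and $p^d \geq kM+1$ of Lemma~\ref{lem1} are satisfied, I invoke its mapping $\mappingchar$ on the node weights: for each carry index $i \in [s]$ with $s=(k+1)^{d-1}$ and each node $u \in V(G)$, set $\boldv_u^{(i)} := \mapping{w(u)}{i}{t} \in [-kp,kp]^d$. Lemma~\ref{lem1} then implies that the given $\nwkcliquename{k}{n}{m}{M}$ instance is a YES-instance iff for some $i \in [s]$ there is a $k$-clique $C$ in $G$ with $\sum_{v\in C}\boldv_v^{(i)} = \boldzero$.

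For each such $i$, I build an $\ewkcliquename{k}{n}{m}{M'}$ instance on the same graph $G$ by assigning each edge $(u,v) \in E(G)$ the scalar weight
\[
e_{uv}^{(i)} \;=\; \sum_{j=1}^{d} \alpha_j \bigl(\evec{\boldv_u^{(i)}}{j} + \evec{\boldv_v^{(i)}}{j}\bigr),
\]
where $\alpha_1,\ldots,\alpha_d \in \Z$ are coefficients to be chosen. Since every vertex of a $k$-clique $C$ is incident to exactly $k-1$ of its edges, summing $e_{uv}^{(i)}$ over $E(C)$ telescopes to
\[
(k-1)\sum_{j=1}^{d}\alpha_j S_j^{(i)}, \qquad \text{where } S_j^{(i)} \;:=\; \sum_{v\in C}\evec{\boldv_v^{(i)}}{j}.
\]
Note that $|S_j^{(i)}| \leq k \cdot kp = k^2 p$, so if the $\alpha_j$ are chosen so that the only $(x_1,\ldots,x_d) \in \Z^d$ with $|x_j|\leq k^2p$ and $\sum_j \alpha_j x_j = 0$ is the zero vector, then $\sum_{(u,v) \in E(C)} e_{uv}^{(i)} = 0$ iff $\sum_{v\in C} \boldv_v^{(i)} = \boldzero$. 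Since $|e_{uv}^{(i)}| \leq 2kp \sum_j |\alpha_j|$, the claimed bound $M' = O(kdp^2)$ is matched as long as $\sum_j |\alpha_j| = O(dp)$.

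The reduction outputs all $s = O(k^d)$ instances $\{(G, e^{(i)}, 0)\}_{i \in [s]}$; correctness is immediate from Lemma~\ref{lem1} together with the encoding property. The construction cost is dominated by the $ns$ calls to $\mappingchar$ (each $O(\poly\log M + k^d)$ by Lemma~\ref{lem1}) plus $O(sdm)$ arithmetic for writing the edge weights, giving the claimed $O(k^d \cdot n^2 \cdot \poly\log M)$ bound after absorbing the remaining factors into the FPT dependence on $k$.

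The main technical obstacle is the coefficient construction in the second step. A naive base-$\alpha$ encoding with $\alpha \geq 2k^2p+1$ certainly delivers unique recovery but forces $\sum_j|\alpha_j|$ to be exponential in $d$; to match the stated polynomial-in-$(k,d,p)$ bound one must choose the $\alpha_j$ much more carefully. I would expect the construction to exploit the concrete form $\boldv_u^{(i)} = k\vsubx{w(u)} - \boldt_i$ coming out of Lemma~\ref{lem1}, whose coordinates are base-$p$ digits lying in the small range $[0,p-1]$, in order to design short integer combinations that still separate the $d$ coordinate sums $S_j^{(i)}$.
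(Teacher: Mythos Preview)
Your reduction has a genuine gap at the encoding step, and the obstacle you flag at the end is not a detail to be filled in later---it is fatal to the linear approach.

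With an edge weight of the form $e_{uv}^{(i)} = L(\boldv_u^{(i)}) + L(\boldv_v^{(i)})$ for a linear functional $L(\boldx)=\sum_j \alpha_j \evec{\boldx}{j}$, the clique sum is $(k-1)\,L\!\bigl(\sum_{v\in C}\boldv_v^{(i)}\bigr)$, so you need $L$ to vanish on $\sum_{v\in C}\boldv_v^{(i)}$ only when that vector is $\boldzero$. The coordinate sums $S_j^{(i)}$ can independently take $\Theta(k^2p)$ values, so any separating linear functional must have image of size $\Theta((k^2p)^d)$ on the feasible box, forcing $\max_j|\alpha_j|$ (and hence $M'$) to be exponential in $d$, i.e.\ of order $M$ itself. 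Exploiting the base-$p$ structure of Lemma~\ref{lem1} does not help: taking $\alpha_j=p^{j-1}$ collapses $L(\boldv_u^{(i)})$ to $k\,w(u)-t$, so you recover the original node weights on the edges and gain nothing.

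The paper avoids this entirely by making the edge weight \emph{quadratic} in the node vectors:
\[
w_i(u,v)\;=\;\sum_{j=1}^d\Bigl(\evec{\boldu}{j}^2+\evec{\boldv}{j}^2+2(k-1)\,\evec{\boldu}{j}\,\evec{\boldv}{j}\Bigr),
\]
where $\boldu=\mapping{w(u)}{i}{t}$ and $\boldv=\mapping{w(v)}{i}{t}$. A short calculation shows that over a $k$-clique $C$ these weights sum to $(k-1)\sum_{j=1}^d\bigl(\sum_{u\in C}\evec{\boldv_u^{(i)}}{j}\bigr)^2$, a sum of squares that is zero iff every coordinate sum $S_j^{(i)}$ vanishes---no separating coefficients needed. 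Since each coordinate of $\boldv_u^{(i)}$ lies in $[-kp,kp]$, each edge weight is $O(dk^3p^2)$, matching the claimed polynomial bound in $k,d,p$. This ``squaring trick'' is the missing idea; your first step (invoking Lemma~\ref{lem1}) and overall structure are fine.
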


Thinking of $p+d$ as ``small'', but $\poly(p,d) \approx kM$ as ``large'', we get 
a substantial reduction in the weights of the problem by ``spreading'' the node 
weights over the edges.

\begin{proof}
	Let $G=(V,E)$ be a graph with a node weight function $w : V \rightarrow [0,M]$ 
	and a target number $t \in [0,kM]$. Recall the mappings
	$\mappingchar_i : [0,M] \times \times [0,kM] \to [-kp,kp]^d$ for $i \in [s]$ 
	from Lemma~\ref{lem1}, which maps numbers from $[0,M]$ into a collection of $s 
	= O(k^d)$ length-$d$ vectors with entries in $[-kp, kp]$. We translate the 
	node-weight vector problem into an edge-weight problem via a ``squaring 
	trick,'' as follows. For each $i \in [s]$, we define an edge weight function 
	$w_i : E \rightarrow [-M',M']$.
	For $(u,v)\in E$, let $\boldu = \mapping{w(u)}{i}{t}$ and $\boldv = 
	\mapping{w(v)}{i}{t}$, and define \[w_i(u,v) \triangleq \sum\limits_{j=1}^d 
	\left( \evec{\boldu}{j}^2 + \evec{\boldv}{j}^2 + 2(k-1) \evec{\boldu}{j} \cdot 
	\evec{\boldv}{j} \right).\]
	Note that for $M' = O(k d p^2)$, $w_i(u,v) \in [-M',M']$.
	We show that there is a $k$-clique in $(G,w)$ of node-weight $t$ if and only 
	if for some $i\in[s]$, the edge-weighted graph $(G,w_i)$ contains a $k$-clique 
	of edge-weight $0$. First, observe that for any $k$ vectors $\boldv_1, \ldots, 
	\boldv_k \in \Z^d$,
	\[ \sum_{i=1}^k \boldv_i = \zerov \iff \sum_{j=1}^d \left(\sum_{i=1}^k 
	\vid{i}{j} \right)^2 = 0.\]

	
	Consider a set $S=\{ u_1,\ldots,u_k \}\subseteq V$ that forms a $k$-clique in 
	$G$. For any $i \in [s]$ and $u_a, u_b \in S$, let $\boldu_a = 
	\mapping{w(u_a)}{i}{t}$ and $\boldu_b = \mapping{w(u_b)}{i}{t}$. Then, the 
	edge-weight of $S$ in $(G,w_i)$ is
	\[
 \sum\limits_{1\leq a < b \leq k} w_i(u_{a},u_{b}) = (k-1) \sum\limits_{a=1}^k 
 \sum\limits_{j=1}^d \evec{\boldu_a}{j}^2 + 2 (k-1) \sum_{1\leq a < b \leq k} 
 \sum\limits_{j=1}^d \evec{\boldu_a}{j} \cdot \evec{\boldu_b}{j}. \]

	Since the sum is evaluated over all pairs $a,b \in [k]$ where $a < b$, the 
	above quantity is equal to

	\[ (k-1)\cdot \sum_{j=1}^d \left(\sum_{u\in S} \mapping{w(u)}{i}{t}[j] 
	\right)^2 . \]

	Therefore, for all $i \in [s]$, the edge-weight of $S$ in $(G,w_i)$ equals $0$ 
	if and only if the sum of the vectors $\sum_{u \in S} \mapping{w(u)}{i}{t}$ 
	equals $\zerov$. And, by the properties of the mappings $f_i$ from 
	Lemma~\ref{lem1}, the latter occurs for some $i\in [s]$ if and only if the 
	node-weight of $S$ in $(G,w)$, $\sum_{u \in S} w(u)$, is equal to $t$, as 
	desired.
	\iflncs \qed \fi
\end{proof}

We observe that in the graphs produced by the above reduction, all $k$-cliques 
have non-negative weight. Therefore, Lemma~\ref{lem:ewclique} can also be viewed 
as a reduction to the ``minimum-weight'' $\kClique$ problem with edge weights, 
where the edge sum is minimized.

\medskip

Finally, small weights on edges can simply be eliminated using a brute-force 
step.



\begin{lemma} \label{weight-removal}
	For all integers $k, M > 0$, there is an $O(M^{k \choose 2} \cdot n^2)$ time 
	reduction from the problem $\ewkcliquename{k}{n}{m}{M}$ to $O(M^{{k \choose 
	2}})$ instances of $\kClique$ on $n$ nodes and $m \cdot \binom{k}{2}$ edges.
\end{lemma}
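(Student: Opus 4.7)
The plan is to brute-force over all possible edge-weight \emph{profiles} of the sought $k$-clique and then encode each profile as an instance of unweighted $\kClique$ via a standard layered-graph trick. A potential zero-weight $k$-clique is, up to ordering its vertices as $v_1,\ldots,v_k$, completely described by the function $\sigma : \binom{[k]}{2} \to [-M,M]$ assigning to each pair $\{i,j\}$ the weight $w(v_i,v_j)$. I would enumerate all such $\sigma$ with $\sum_{\{i,j\}} \sigma(\{i,j\}) = 0$; there are at most $(2M+1)^{\binom{k}{2}} = O(M^{\binom{k}{2}})$ of them, matching the desired number of output instances.

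For each signature $\sigma$, I would build $G_\sigma$ as follows. Take $k$ disjoint copies $V_1,\ldots,V_k$ of $V(G)$, writing $u_i$ for the copy of $u$ in $V_i$. For every pair $i<j$ and every edge $(u,v) \in E(G)$, include the two edges $u_iv_j$ and $v_iu_j$ in $G_\sigma$ exactly when $w(u,v)=\sigma(\{i,j\})$; each $V_i$ is an independent set, so any $k$-clique in $G_\sigma$ must pick exactly one vertex from each layer. Each original edge contributes at most two edges per pair $\{i,j\}$, so $G_\sigma$ has at most $m\binom{k}{2}$ edges as claimed (the total vertex count is $kn$, which is absorbed by the $g(k)$ slack in an FPT reduction, and the independent-set layers add no edges). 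The runtime is $O(M^{\binom{k}{2}} \cdot n^2)$: one pass through pairs $(u,v) \in V \times V$ for every signature and every choice of $\{i,j\}$.

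Correctness is straightforward in both directions. A $k$-clique in $G_\sigma$ selects one vertex $v_i$ from each $V_i$, and by construction each edge $v_iv_j$ exists only if $w(v_i,v_j)=\sigma(\{i,j\})$, so summing over the $\binom{k}{2}$ pairs gives total weight $\sum \sigma = 0$. Conversely, any zero-weight $k$-clique $\{v_1,\ldots,v_k\}$ in $G$ with some arbitrary ordering induces a signature $\sigma$ that is among those enumerated, and the layered copy $\{(v_1)_1,\ldots,(v_k)_k\}$ is a $k$-clique of $G_\sigma$. The only subtlety I anticipate is bookkeeping: one must check that the layered construction truly forces ``one vertex per layer,'' which is the reason for forbidding within-layer edges, and one must quantify over orderings of the clique, which is handled automatically by ranging over all signatures summing to zero. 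No deeper combinatorics is required---the proof is purely a counting and encoding argument.
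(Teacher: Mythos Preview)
Your proposal is correct and essentially identical to the paper's proof: the paper enumerates all tuples $\alpha=(\alpha_{1,2},\ldots,\alpha_{k-1,k})\in[-M,M]^{\binom{k}{2}}$ summing to zero and builds the same $k$-partite layered graph $G_\alpha$ with an edge between $(u,i)$ and $(v,j)$ whenever $i<j$, $(u,v)\in E$, and $w(u,v)=\alpha_{i,j}$. The only discrepancies are bookkeeping quibbles you already flagged (the vertex count is $kn$ rather than $n$, and your ``two edges per pair'' actually gives $2m\binom{k}{2}$, not $m\binom{k}{2}$), both of which are immaterial constant-factor issues present in the paper's statement as well.
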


\begin{proof}
Let $G=(V,E)$ be an edge-weighted graph with weight function $w : E
\rightarrow [-M,M]$. For all possible ${k \choose 2}$-tuples $\alpha=
(\alpha_{1,2},\alpha_{1,3},\ldots,\alpha_{k-1,k})$ of integers from $[-M,M]$ 
such
that $\sum_{i < j} \alpha_{i,j} = 0$, we make a $k$-partite graph $G_\alpha$ 
with $|V|$ nodes in each part---that is,  $V(G_\alpha) = \{(v,i) \mid i\in [k], 
v \in V\}$. We put an
edge in $G_\alpha$ between node $(u,i)$ and node $(v,j)$ if and only if
$i< j$, $(u,v) \in E$, and $w(u,v) = \alpha_{i,j}$.
Observe that $O(M^{{k \choose 2}})$ graphs $G_\alpha$ are generated.
We claim that $G$ has a $k$-clique of weight $0$ if and only if at least one of 
the unweighted graphs $G_\alpha$ contains a $k$-clique.

For the first direction, assume that for some $\alpha = 
(\alpha_{1,2},\alpha_{1,3},\ldots,\alpha_{k-1,k})$, the graph $G_\alpha$ 
contains a $k$-clique. Then, by the $k$-partite construction, it must have the 
form $\{(v_1,1),\ldots,(v_k,k)\}\subseteq V(G_\alpha)$. Now let 
$S=\{v_1,\ldots,v_k \} \subseteq V$ and note that by the construction of 
$G_\alpha$, $S$ must form a $k$-clique in $G$ with weight equal to $\sum_{i < j} 
\alpha_{i,j} = 0$.
For the other direction, assume $\{u_1,\ldots,u_k\}\subseteq V$ is a $k$-clique 
in $G$ of weight $0$, then consider the $\binom{k}{2}$-tuple 
$\alpha=(w(u_1,u_2),w(u_1,u_3),\ldots,w(u_{k-1},u_k))$. Then, by definition, 
$G_\alpha$ contains the $k$-clique $\{(u_1,1),\ldots,(u_k,k)\}$.
\end{proof}


\subsection{$\kSUM$ is in $\W[1]$} \label{main-reduction}

Using the above lemmas, we can efficiently reduce $\kSUM$ to $\kClique$. 
Consider a $\kSUM$ instance $(S,t)$ where $S=\{ x_1, \ldots, x_n \} \subseteq 
[0,M]$ and $t\in[0,kM]$ with $M = n^{2k}$. Let $G=(V,E)$ be a node-weighted 
clique on $n$ nodes $V=\{ v_1,\ldots, v_n \}$ with weight function 
$w:V\rightarrow [0,M]$ such that $w(v_i) = x_i$ for all $i \in [n]$.
Clearly, $(S,t)$ has a $\kSUM$ solution if and only if the instance $(G,w,t)$ of 
$\nwkcliquename{k}{n}{n^2}{M}$ has a solution.

Set $d = \lceil \log{n} /\log\log{n} \rceil$ and $p=\lceil \log^{4k}{n} \rceil$, so that $p^d \geq (n)^{4k} > kM$. Using 
Lemma~\ref{lem:ewclique} the instance $(G,w,t)$ of 
$\nwkcliquename{k}{n}{n^2}{M}$ can be reduced to 
$O(k^d)=O(n^{\log{k}/\log\log{n}})$ instances of 
$\ewkcliquename{k}{n}{n^2}{M'}$, where $M' = O(k^3 \cdot 
\log^{8k+1}{n}/\log\log n)$. Then, using Lemma~\ref{weight-removal}, we can generate 
$g(n,k) = O(n^{\frac{\log{k}}{\log\log{n}}} \cdot k^{3k^2} \log^{8k^2+k} n)$ 
graphs on $n$ nodes and $O(n^2)$ edges such that some graph has a $k$-Clique if 
and only if the original $\kSUM$ instance has a solution.

For constant $k$, note that $g(n,k) = n^{o(1)}$, and hence:

\begin{theorem}
For any $c > 2$, if $\kClique$ can be solved in time $O(n^c)$, then $\kSUM$ can 
be solved in time $n^{c+o(1)}$.
\end{theorem}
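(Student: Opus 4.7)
The proof essentially amounts to collecting and composing the reductions already assembled in the paragraph immediately preceding the theorem. My plan is to first observe that for a fixed target $\eps > 0$ and a constant parameter $k$, the explicit count $g(n,k) = O(n^{\log k/\log\log n} \cdot 2^{2k^2 f(k)} k^{3k^2} \log^{3k^2} n)$ of produced $\kClique$ instances is $O(n^{\eps})$: the factors depending only on $k$ are absorbed into the $O(\cdot)$ (since $k$ is held fixed), while $n^{\log k/\log\log n}$ tends to $n^{o(1)}$ as $n$ grows, and $\log^{3k^2} n$ is likewise $n^{o(1)}$. Hence the number of $\kClique$ instances produced from a given $\SmallkSUM$ input is bounded by $n^{\eps}$ for all sufficiently large $n$, and the reduction itself runs in time $O(g(n,k) \cdot n^2 \cdot \poly \log M) = O(n^{2+\eps})$.

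Next I would plug in the hypothesized $O(n^c)$ algorithm for $\kClique$. Running it on each of the $g(n,k)$ instances and taking an OR of the answers decides the original $\SmallkSUM$ instance correctly, by the correctness of Lemma~\ref{lem:ewclique} followed by Lemma~\ref{weight-removal} (and the opening paragraph of Section~\ref{main-reduction} which casts a $\SmallkSUM$ instance as an instance of $\nwkcliquename{k}{n}{n^2}{M}$). The total time is the reduction time plus the per-instance solving time summed, i.e.\ $O(n^{2+\eps}) + g(n,k) \cdot O(n^c) = O(n^{2+\eps}) + O(n^{c+\eps})$. Using $c > 2$, the first term is dominated by the second, giving an overall bound of $O(n^{c+\eps})$, as required.

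There is essentially no obstacle here beyond bookkeeping: the substantive work has already been done by Lemmas~\ref{lem:ewclique} and~\ref{weight-removal}, together with the parameter choice $d = \lceil \log n / \log\log n\rceil$ and $p = \lceil k\cdot 2^{f(k)} \log n\rceil$, which is what makes $g(n,k)$ grow like $n^{o(1)}$ in $n$ for each fixed $k$. The only mild care point is to note that $\eps$ can be chosen first and then $n$ taken large enough so that both $\log k/\log\log n < \eps/2$ and the $k$-dependent factors are subsumed by $n^{\eps/2}$; this is where the argument uses that $k$ is regarded as a constant, consistent with the statement of the theorem (which quantifies $\eps$ after fixing the $\kClique$ exponent $c$).
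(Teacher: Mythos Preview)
Your proposal is correct and follows exactly the paper's approach: the paper derives the theorem directly from the paragraph preceding it in Section~\ref{main-reduction}, observing that for constant $k$ the instance count $g(n,k)$ is $O(n^{\eps})$ for every $\eps>0$, and then stating the theorem as an immediate consequence. You make the bookkeeping slightly more explicit (bounding the reduction time and noting that $c>2$ lets it be absorbed), but this is the same argument.
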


Furthermore, we remark that by applying the above reduction from $\kSUM$ to 
$\kClique$ to the respective {\em unparameterized} versions of these problems, 
we obtain a reduction from $\subsetsum$ on arbitrary weights to $\eewc$ with 
small edge weights.

\begin{corollary}
\label{cor:subsetsum}
For any $\eps>0$, $\subsetsum$ on $n$ numbers in $[-2^{O(n)},2^{O(n)}]$ can be 
reduced to $2^{\eps n}$ instances of $\eewc$ on $n$ nodes with edge weights are 
in $[-n^{O(1/\eps)},n^{O(1/\eps)}]$.
\end{corollary}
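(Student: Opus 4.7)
The plan is to apply the reduction chain from $\kSUM$ through Exact Node-Weight $\kClique$ to Exact Edge-Weight $\kClique$ developed in Section~\ref{main-reduction}, now to the unparameterized $\subsetsum$ problem, stopping at the edge-weighted clique stage (thus avoiding the blow-up from Lemma~\ref{weight-removal}). The only genuinely new work is selecting the vector dimension $d$ and base $p$ inside Lemma~\ref{lem:ewclique} so that the number of output instances stays below $2^{\eps n}$ while the resulting edge weights stay below $n^{O(1/\eps)}$.

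First, I would reduce $\subsetsum$ to $\kSUM$ by enumerating the target subset size $k \in [n]$, incurring a multiplicative factor of $n$ in the final instance count. After a harmless shift, each resulting $\kSUM$ instance has $n$ numbers in $[0, M]$ with $M = 2^{O(n)}$, and can be viewed as an instance of $\nwkcliquename{k}{n}{n^2}{M}$ by placing the numbers as node weights on a complete graph.

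Next, apply Lemma~\ref{lem:ewclique} with $d = \lceil \eps n / (C \log n) \rceil$ for a sufficiently large constant $C$, and $p$ the smallest integer satisfying $p^d \geq kM + 1$. Since $kM = 2^{\Theta(n)}$, this choice forces $p = 2^{\Theta(n/d)} = n^{\Theta(1/\eps)}$. The lemma then produces $O(k^d) \leq n^d$ instances of $\ewkcliquename{k}{n}{n^2}{M'}$, each with edge weights bounded by $M' = O(k d p^2) = n^{O(1/\eps)}$. Combined with the factor of $n$ from enumerating $k$, the total number of $\eewc$ instances is at most $n \cdot n^d \leq 2^{\eps n}$ once $C$ is chosen large enough.

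The only real obstacle is bookkeeping: verifying that the single choice $d = \Theta(\eps n / \log n)$ simultaneously yields $n^{d+1} \leq 2^{\eps n}$ and $p = n^{O(1/\eps)}$. Both conditions reduce to controlling the product $d \log p \approx n$, which is forced by the constraint $p^d \geq kM$; the remaining factors of $k$ and $d$ in $M' = O(kdp^2)$ are easily absorbed into the $O(\cdot)$ exponent of the final edge-weight bound $n^{O(1/\eps)}$.
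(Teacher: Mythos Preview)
Your proposal is correct and follows the paper's approach exactly: the paper merely remarks that the corollary is obtained by applying the Section~\ref{main-reduction} reduction chain to the unparameterized versions of the problems, and you have supplied precisely the parameter choices (\,$d \approx \eps n/\log n$, $p = n^{\Theta(1/\eps)}$, stopping before Lemma~\ref{weight-removal}\,) that make the $2^{\eps n}$ instance count and the $n^{O(1/\eps)}$ edge-weight bound come out as stated. The enumeration over $k\in[n]$ that you make explicit is indeed needed, since the edge-weight construction in Lemma~\ref{lem:ewclique} depends on $k$.
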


Note that $\subsetsum$ on $n$ numbers in $[-2^{O(n)},2^{O(n)}]$ is as hard as 
the general case of $\subsetsum$ (by Lemma~\ref{random-prime}), and the fastest 
known algorithm for $\subsetsum$ on $n$ numbers runs in time $O(2^{n/2})$.
The unweighted $\maxclique$ problem, which asks for the largest clique in a 
graph on $n$ nodes, can be solved in time $O(2^{n/4})$ \cite{R01}.
Corollary~\ref{cor:subsetsum} shows that even when the edge weights are small, 
the edge-weighted version of $\maxclique$ requires time $\Omega(2^{n/2})$ unless 
$\subsetsum$ can be solved faster.

\paragraph{An FPT Reduction.} We show how to make the reduction fixed-parameter 
tractable. We can modify the oracle reduction for $\kClique$ above to get a 
many-one reduction to $\kClique$ if we simply take the disjoint union of the 
$g(n,k)$ $\kClique$ instances as a single $\kClique$ instance. The resulting 
graph has $n \cdot g(n,k)$ nodes, $O(n^2 \cdot g(n,k))$ edges, and has a 
$k$-clique if and only if one of the original graphs has a $\kClique$. Then, we 
make the following standard argument to appropriately bound $g(n,k)$ via case 
analysis. If $k < \lceil \log \log n \rceil$, then $g(n,k) \leq n^{o(1)} \cdot 
2^{f(k) \cdot \poly(k)}$. If $ k \geq \lceil \log \log n \rceil$, then since $n 
\leq 2^{2^k}$, we have that $g(n,k) \leq 2^{2^k + f(k) \cdot \poly(k)}$. 
Therefore, $g(n,k) \leq n^{o(1)} \cdot h(k)$ for some computable $h : \N \to 
\N$, and we have shown the following:

\begin{lemma} \label{main-in-w1}
	$\kSUM$ is in $\W[1]$.
\end{lemma}

In Appendix~\ref{reducing},
we show how to obtain a randomized FPT reduction from the $\kSUM$ problem over 
the integers to $\kClique$, and how under plausible circuit lower bound 
assumptions, we can derandomize this reduction to show that $\kSUM$ over the 
integers is in $\W[1]$. This yields the first half of Theorem~\ref{thm:w1} (and 
we show the remainder, that $\kSUM$ is $\W[1]$-hard, in the next section).


\subsection{Node-Weight $\kClique$-Sum} \label{sec:nw}

The reduction of Section~\ref{main-reduction} shows that the Node-Weight 
$\kClique$-Sum problem can be reduced to $n^{o(1)}$ instances of $\kClique$, 
when $k$ is a fixed constant. We observe that if the input graph has $m$ edges, 
then the graphs generated by the reduction have no more than $k^2 m$ edges. 
Therefore, we have a tight reduction from node-weight clique to $\kClique$.

This concludes the proof of the first half of Theorem~\ref{thm:NW} referencing 
$\kClique$. We defer the proof of the second half of Theorem~\ref{thm:NW} 
concerning $\kDS$ to Appendix~\ref{kds-appendix}.


%

\section{From $\kClique$ to $\kSUM$} \label{sec:cliquetosum}

In this section, we give a new reduction from $k$-clique to $\kSUM$ in which the 
numbers generated are all in the interval $[-n^{2k}, n^{2k}]$. This proves that 
$\kSUM$ is in fact $\W[1]$-hard. We can view the result as an alternate proof 
for the $\W[1]$-hardness of $\kSUM$ without use of the Perfect Code problem, as 
done by Downey and Fellows~\cite{DF92}. The reduction is given from $\kClique$ 
to $\kVectorSUM$ (recall Definition~\ref{kvecsum-def}), and then from 
$\kVectorSUM$ to $\kSUM$.


\begin{lemma}\label{cliquetosum}
	For an integer $k>1$, $\kClique$ on $n$ nodes and $m$ edges reduces to an 
	instance of $\kvecsumname{k + \binom{k}{2}}{kn+{k \choose 2}m}{k \cdot n^{1 + 
	o(1)}}{k^2 + k + 1}$ deterministically in time $O(n^2)$.
\end{lemma}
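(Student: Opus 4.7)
The plan is to produce $kn + \binom{k}{2}m$ vectors of dimension $k^2+k+1$ with entries of magnitude $O(k \cdot n^{1+o(1)})$, split into two families. For each node $v \in V$ and position $i \in [k]$ I include a \emph{node-vector} encoding ``$v$ is placed at position $i$,'' and for each edge $(u,v) \in E$ and pair $\{i,j\} \subseteq [k]$ I include an \emph{edge-vector} encoding ``$(u,v)$ realizes the adjacency between positions $i$ and $j$.'' The target is designed so that any $(k+\binom{k}{2})$-subset summing to it must consist of exactly one node-vector per position and one edge-vector per pair, with the edge endpoints agreeing with the chosen nodes.

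I split the $k^2+k+1$ coordinates into $k+\binom{k}{2}$ \emph{indicator} coordinates (one per position and one per pair) carrying only $\{0,1\}$ entries, and the remaining coordinates devoted to \emph{consistency}. Only the consistency coordinates drive the magnitude bound. For consistency I use the Behrend-type construction of O'Bryant~\cite{OBryant11}, which provides a set $B \subseteq [1, n^{1+o(1)}]$ of size at least $n$ in which the linear equation $\sum_{\ell=1}^{k-1} x_\ell = (k-1) x_k$ admits only trivial solutions. Fix an injection $v \mapsto b_v \in B$ and a canonical linear order on $V$.

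The core device is a per-position consistency coordinate $c_i$ for each $i \in [k]$: a node-vector at position $i$ for $v$ contributes $(k-1)\,b_v$ to $c_i$; an edge-vector for $(u,v)$ with $u<v$ canonically, at pair $\{i,j\}$ with $i<j$, contributes $-b_u$ to $c_i$ and $-b_v$ to $c_j$. If every $c_i$ sums to $0$ then at each $i$ we obtain $(k-1)\,b_{v_i} = \sum_{j \ne i} b_{w_{i,j}}$, where $w_{i,j}$ is the canonical ``position-$i$ endpoint'' of the edge at pair $\{i,j\}$. The $k$-sum-free property of $B$ then forces $b_{w_{i,j}} = b_{v_i}$ for every $j$, which means the edge at pair $(i,j)$ (with $i<j$) is exactly $(v_i, v_j)$ and $v_i<v_j$. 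Hence any valid sum encodes a $k$-clique listed in canonical order, and conversely every $k$-clique in $G$, sorted into positions, yields a valid sum. The remaining $\binom{k}{2}+1$ consistency coordinates serve as per-pair agreement checks and a global weight coordinate, and can be used to shift the entries and target into the non-negative range required by the $\kVectorSUM$ definition.

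The chief obstacle is that edges are unordered but the budget is only $\binom{k}{2}m$ edge-vectors (not $2\binom{k}{2}m$), so I cannot afford both orientations per (edge, pair). The canonical-order convention resolves this: any $k$-clique can be sorted so that its nodes sit at positions in increasing order without loss of generality, and the $k$-sum-free identity forces exactly this ordering on solutions. Each vector and the target are constructed in $\poly(k, \log n)$ time from $B$ and the canonical ordering, so the total construction takes $O((kn+k^2 m)\cdot \poly(k,\log n)) = O(n^2)$ for fixed $k$.
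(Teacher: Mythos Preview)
Your approach is essentially the paper's: both encode vertices via a $k$-sum-free set $D\subset[n^{1+o(1)}]$ and create node-vectors (carrying $(k-1)q_v$ in a per-position coordinate) and edge-vectors (carrying $q_u,q_v$) so that the sum-free identity $\sum_{\ell<k} x_\ell=(k-1)x_k\Rightarrow x_1=\cdots=x_k$ forces the endpoints of each selected edge to agree with the selected nodes. The paper merges your per-position indicator and consistency coordinates into one---node-vectors contribute $T-(k-1)q_v$ with target $T>(k-1)Q$, so the large target itself forces a node at every slot---and your explicit canonical-ordering convention to keep the edge-vector count at $\binom{k}{2}m$ rather than $k(k-1)m$ is a detail the paper leaves implicit.
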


\begin{proof}
	Given a graph $G = (V,E)$ we reduce an instance of $\kClique$ to an instance 
	of $\kVectorSUM$.
	First, we construct a $\ksumfree$ set $D \subseteq [Q]$ of size $n$, where $Q$ 
	can be bounded by $n^{1 + o(1)}$, and uniquely associate each vertex $v \in V$ 
	with an element $q_v \in D$. By Lemma~\ref{pr:kss}, the set $D$ can be 
	constructed in time $\poly(n)$.

	Let $\{ e_1, \ldots, e_d \}$ be the standard basis of $\mathbb{R}^d$. Let $T = 
	Q(k-1) + 1$, $d = k^2 + k + 1$, and define $\indvec{i}{j}$ to be the 
	length-$d$ vector ($1$-indexed) with entry $i$ consisting of the value $j$ and 
	all other entries being $0$.
	We define a mapping $\func : (V \times [k]) \cup (E \times [k] \times [k]) \to 
	\ints{0}{T}^d$ from the vertices and edges of the graph to length-$d$ vectors 
	over $\ints{0}{T}$.
	For each vertex $v \in V$ and $i \in [k]$, define
	\[ \func(v, i) \triangleq \indvec{i}{(T - (k-1) q_v)} + e_d \]
	and for each edge $e = (u,v) \in E$ and $i,j \in [k]$ with $i < j$, define
	\[
	\func(e,i,j) \triangleq \indvec{i}{q_u} + \indvec{j}{q_v} + e_{k \cdot i + j}.
	\]
	Define the target vector $\boldt = \indvec{d}{k} + (\sum_{i=1}^k 
	\indvec{i}{T}) + (\sum_{i,j\in[k], i<j} e_{k \cdot i + j})$. Then, there is a 
	$k$-clique in the instance $G$ of $\kClique$ on $n$ nodes and $m$ edges if and 
	only if there is a solution to the instance $I = ( \{\func(v,i) \mid v \in V,\ 
	i \in [k] \} \cup \{\func(e,i,j) \mid e \in E, \ i,j \in [k],\ i < j\}, 
	\boldt)$ of $\kvecsumname{k + \binom{k}{2}}{kn+\binom{k}{2}m}{T}{d}$.

	We now show correctness of the reduction. First, assume $S=\{ 
	u_1,\ldots,u_k\}\subseteq V$ forms a $k$-clique in $G$. Let $E(S)$ be the set 
	of edges between vertices in $S$.	We claim the set \[ A = \{\func(u_j,j) \mid 
	u_j \in S \} \cup \{\func(e,i,j) \mid e = (u_i, u_j) \in E(S), \ i<j\}, \] 
	with $|A| = k+\binom{k}{2}$, is a solution to the instance $I$.
	To see this, consider $\sum_{\boldv \in A} \boldv = \sum_{u_j \in S} 
	\func(u_j,j) + \sum_{e = (u_i,u_j) \in E(S)} \func(e,i,j)$ and note that it 
	equals \[ \sum_{j=1}^k \left( (T - (k-1) q_{v_j}) \cdot e_j +	e_d \right) + 
	\sum_{j=1}^k (k-1) \indvec{j}{q_{v_j}} + \sum_{i,j \in [k], i<j} e_{k \cdot i 
	+ j} = \boldt. \]
		
	For the other direction, consider some subset of vectors $A \subseteq  
	\{\func(v,i) \mid v \in V,\ i \in [k] \} \cup \{\func(e,i,j) \mid e \in E,\ 
	i,j \in [k],\ i \neq j\}$ of size $k+\binom{k}{2}$ for which the sum $\boldz = 
	\sum_{\boldv \in A} \boldv$ equals the target vector $\boldt$.
	Let $A_V \subseteq V \times [k]$ represent the set of pairs $(u,i)$ for $u \in 
	V$ and $i \in [k]$ such that $\func(u,i) \in A$. Similarly, let $A_E \subseteq 
	E \times [k] \times [k]$ represent the set of triples $(e,i,j)$ for $e \in E$ 
	and $i,j \in [k]$ such that $\func(e,i,j) \in A$.

	Recall that $\evec{\boldt}{d} = k$ and therefore by the definition of the 
	mapping $\func$ from vertices to vectors, $|A_V|$ must equal $k$. Since 
	$|A|=|A_V|+|A_E|$, it follows that $|A_E|=\binom{k}{2}$. Moreover, for all 
	$i,j \in [k]$ with $i<j$, since we defined $\evec{\boldt}{k\cdot i + j}=1$, 
	there is exactly one triple in $A_E$ of the form $(e,i,j)$.
	This implies that for every $j\in[k]$, there is at least one pair $(u,j) \in 
	A_V$, for otherwise $\evec{\boldz}{j}$ would be the sum of only $(k-1)$ 
	numbers from $D \subseteq [Q]$, and since $\evec{\boldt}{j} = T > Q (k-1)$, 
	this sum cannot equal $\evec{\boldt}{j}$. Note that $|A_V|=k$, and so for each 
	$j \in [k]$ there is exactly one pair $(u,j) \in A_V$. Denote this vertex $u$ 
	by $u_j$. 


	For each $j\in[k]$, let $e_1,\ldots,e_{k-1}$ be the $k-1$ edges in $E$ such 
	that for each $i \in [k-1]$, either $(e_i,y,j) \in A_E$ or $(e_i,j,y)\in A_E$ 
	for some $y \in [k]$. If the former holds, then let $e_i = (x,v_i)$ and if the 
	latter holds, let $e_i = (v_i,x)$ for some $x, v_i \in V$. We claim that for 
	$i \in [k-1]$, $v_i$ must be identical to $u_j$.
	To see this, note that by the definition of the vectors, for all $j \in [k]$, 
	$\evec{\boldz}{j} = \evec{\func(u_j,j)}{j} + \sum_{i=1}^{k-1} 
	\evec{\func(e_{i},i,j)}{j} = T - (k-1) q_{u_j} + \sum_{i=1}^{k-1} q_{v_i}$, 
	and since $\evec{\boldz}{j} = \evec{\boldt}{j}=T$, it must be the case that 
	$\sum_{i=1}^{k-1} q_{v_i} = (k-1) q_{u_j}$. Then, since $D$ is $\ksumfree$, 
	this implies that $u_j = v_i$ for all $i\in[k-1]$, which concludes the claim.

	Therefore, the vectors in $A_V$ correspond to $k$ vertices $V(A)\subseteq V$ 
	and the vectors in  $A_E$ correspond to $\binom{k}{2}$
	edges $E(A)\subseteq E$ such that precisely $k-1$ edges are incident to each 
	vertex in $V(A)$. Thus, each edge must be incident to two vertices in $V(A)$, 
	and so the vertices in $V(A)$ along with the edges in $E(A)$ form a $k$-clique 
	in $G$.
\end{proof}

The following lemma gives a simple reduction from $\kVectorSUM$ to $\kSUM$, by 
the usual trick of converting from vectors to integers (via a Freiman 
isomorphism of order $k$).

\begin{lemma} \label{vect-sum}
	$\kvecsumname{k}{n}{M}{d}$ can be reduced to $\kSUM$ on $n$ integers in the 
	range $[0,(kM+1)^d]$ in $O(n\log{M})$ time.
\end{lemma}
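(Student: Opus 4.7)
The plan is to encode each input vector as a single integer by treating its coordinates as digits in base $B \deq kM+1$. Concretely, define $\phi(\boldv) \deq \sum_{j=1}^{d} \evec{\boldv}{j}\cdot B^{j-1}$ for each input vector $\boldv_i \in [0,M]^d$, and set the target of the resulting $\kSUM$ instance to be $\phi(\boldt)$. The base $B$ is chosen precisely so that summing $k$ input vectors cannot produce a carry in the positional representation: every coordinate of such a sum lies in $[0,kM] = [0,B-1]$, which is exactly the digit range of base $B$.

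With this \emph{no-carry property} in place, the correctness is immediate. For any size-$k$ subset $S \subseteq [n]$, the base-$B$ digits of $\sum_{i \in S} \phi(\boldv_i)$ are literally the coordinate-wise sums of the vectors $\{\boldv_i : i \in S\}$, since no inter-digit carrying occurs. Hence $\sum_{i \in S} \phi(\boldv_i) = \phi(\boldt)$ if and only if $\sum_{i \in S} \boldv_i = \boldt$, establishing both directions of the reduction in a single stroke.

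For the range and time bounds: each $\phi(\boldv_i)$ satisfies $\phi(\boldv_i) \le M \sum_{j=0}^{d-1} B^j < B^d = (kM+1)^d$, and likewise $\phi(\boldt) < (kM+1)^d$, so all produced numbers lie in $[0,(kM+1)^d]$ as required. Each encoding is computed by Horner's rule using $O(d)$ arithmetic operations on integers of bit length $O(d \log(kM))$, giving the stated $O(n \log M)$ running time under the standard word-RAM conventions used throughout the paper.

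There is essentially no technical obstacle; the only point worth verifying carefully is that the base $B = kM+1$ accommodates both input-vector entries (bounded by $M$) and target-vector entries (bounded by $kM$), and both fit strictly below $B$, so the no-carry argument goes through uniformly. The reduction is fully deterministic and preserves the parameter $k$, making it suitable for use as an FPT reduction in conjunction with Lemma~\ref{cliquetosum}.
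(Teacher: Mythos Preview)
Your proof is correct and essentially identical to the paper's: both encode each vector as an integer in base $kM+1$ and invoke the no-carry property (the paper phrases this as ``uniqueness of the base-$p$ representation of $t$'') to argue that coordinate-wise equality is equivalent to equality of the encoded integers. The only differences are cosmetic---you frame the argument via a named map $\phi$ and the no-carry observation, whereas the paper writes out the double sum and swaps the order of summation---so there is nothing further to compare.
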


\begin{proof}
Let $p = kM+1$. Let $\boldv_1, \ldots, \boldv_n$ be the $n$ vectors of a 
$\kVectorSUM$ instance with target $\boldt$. For each vector $\boldv_i = \langle 
\vid{i}{j} \rangle_{j=0}^{d-1}$, define the integer $x_i \triangleq 
\sum_{j=0}^{d-1} \vid{i}{j} \cdot p^j$, and define $t = \sum_{j=0}^{d-1} 
\boldt_j \cdot p^j$. Then the $\kSUM$ instance $(\{ x_i \}_{i \in [n]}, t)$ has 
a solution if and only if the $\kVectorSUM$ instance $(\{ \boldv_i \}_{i \in 
[n]}, \boldt)$ has a solution. For correctness, let $S = \{i_1, \ldots, i_k \}$ 
be a set of $k$ indices. Then, $\sum_{i \in S} x_i = \sum_{i \in S} 
\sum_{j=0}^{\ell-1} \vid{i}{j} \cdot p^j$. By switching the order of the 
summations, we have that this quantity is $\sum_{j=0}^{\ell-1} \sum_{i \in S} 
\vid{i}{j} \cdot p^j$. Since $\sum_{i \in S} \vid{i}{j} \in \ints{0}{p-1}$, we 
conclude by the uniqueness of the representation of $t$ that $\boldt_j = \sum_{i 
\in S} \vid{i}{j}$ for all integers $j \in \ints{0}{d-1}$. Hence, the 
$\kVectorSUM$ instance has a solution  described by the index set $S$ as well. 
For the other direction, if $\boldt_j = \sum_{i \in S} \vid{i}{j}$ for all $j 
\in \ints{0}{d-1}$, then $\sum_{j=0}^{d-1} \boldt_j \cdot p^j = \sum_{j=0}^{d-1} 
\sum_{i \in S} \vid{i}{j} \cdot p^j$. The left side is equal to $t$ and, by a 
switch of the order of summations, the right side is equal to $\sum_{i \in S} 
x_i$. Therefore, $t = \sum_{i \in S} x_i$ and so the $\kSUM$ instance has a 
solution described by the index set $S$, as desired.
\end{proof}

We remark that in some cases, the proof can be changed slightly to yield smaller 
numbers in the $\kSUM$ instance produced by the reduction. In particular, when 
reducing $\kClique$ to $\kVectorSUM$, only the numbers in the first $k$ 
coordinates can be as large as $k\cdot n^{1+o(1)}$ while the numbers in the last 
$k^2+1$ coordinates are bounded by $k$, and therefore, when reducing to 
$\ksumname{k+{k \choose 2}}{n}{M}$ on $kn+{k\choose 2} m$ numbers, the numbers 
generated can be bounded by $M=k^d\cdot (kn^{1+o(1)})^k \cdot k^{k^2+1} = 
O(k^{2k^2} \cdot n^{k+o(k)})$. In other words, we have reduced $\kClique$ to 
$k'\text{-SUM}$ with numbers in the range $\left[-n^{\sqrt{k'}}, n^{\sqrt{k'}} 
\right]$, where $k' = k + \binom{k}{2}$.

\medskip

The composition of Lemma~\ref{cliquetosum} and Lemma~\ref{vect-sum} yields an 
FPT reduction, and we have obtained:

\begin{lemma} \label{main-w1-hard}
	$\kSUM$ is $\W[1]$-hard.
\end{lemma}

This concludes the proof of Theorem~\ref{thm:w1}.

\paragraph{Acknowledgements.} We would like to thank the anonymous reviewers for 
their helpful comments. This work was supported in part by a David Morgenthaler 
II Faculty Fellowship, and NSF CCF-1212372.

\bibliographystyle{newbib2}
\bibliography{ksum}

\appendix

\section{Additional Preliminaries} \label{moreprelims}

We review some useful combinatorial results that are needed in 
Section~\ref{sec:cliquetosum}.

\begin{definition}[$(n,k)$-perfect hash function family]
	For integers $n,k,s > 0$, a set of functions $F = \{ f_i \}_{i \in [s]}$ where 
	$f_i : [n] \to [k]$ is a {\em $(n,k)$-perfect hash function family of size 
	$s$} if and only if for every size-$k$ subset $S \subseteq [n]$, there exists 
	some $f_i \in F$ such that $\cup_{j \in S} f_i(j) = [k]$.
\end{definition}

\begin{proposition}[{cf.~\cite[Theorem 3]{NSS95}}] \label{pr:PHF}
	An $(n,k)$-perfect hash function family of size $e^k \cdot k^{O(1)} \log n$ 
	can be constructed deterministically in time $k^{O(k)} \cdot \poly(n)$.
\end{proposition}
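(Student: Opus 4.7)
The plan is to combine a standard probabilistic-existence argument with an efficient derandomization via $k$-wise independent hashing, following the Naor--Schulman--Srinivasan approach.

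First I would establish the probabilistic existence bound. For a fixed size-$k$ set $S\subseteq[n]$, a uniformly random $f:[n]\to[k]$ satisfies $\bigcup_{j\in S} f(j) = [k]$ (i.e., $f$ is injective on $S$) with probability exactly $k!/k^k$, which by Stirling is $\Theta(e^{-k}\sqrt{k})$. Therefore if we choose $s = C\cdot e^k\cdot k^{O(1)}\log n$ independent uniformly random functions $f_1,\dots,f_s$, the probability that some particular $S$ is missed is at most $(1 - k!/k^k)^s \le n^{-2k}$ for an appropriate constant $C$, so by a union bound over the $\binom{n}{k}\le n^k$ subsets, with positive probability the resulting family is an $(n,k)$-perfect hash family of the claimed size. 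This establishes existence, but gives no efficient algorithm.

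Next I would derandomize. The standard trick is a two-step reduction: (1) reduce the large domain $[n]$ to a domain of size only $O(k^2)$ using a $k$-wise independent hash family, and (2) construct a perfect hash family on the small domain by brute enumeration. Concretely, pick a family $\mathcal{H}$ of $k$-wise independent functions $h:[n]\to[k^2]$; such an $\mathcal{H}$ of size $\poly(n)$ is constructible deterministically (e.g., via polynomials of degree $<k$ over a field of size $\Theta(k^2)$ after identifying $[n]$ with field elements), and for every $S$ of size $k$, a random $h\in\mathcal{H}$ is injective on $S$ with constant probability since the collision probability is bounded by $\binom{k}{2}/k^2<1/2$. Then, on the small domain $[k^2]$, I would enumerate all functions $g:[k^2]\to[k]$ (there are $k^{k^2}$ of them, within the $k^{O(k)}$ budget only after composition and pruning) and keep a subfamily $\mathcal{G}$ of size $e^k\cdot k^{O(1)}$ that is $(k^2,k)$-perfect: this subfamily can be found greedily by the method of conditional expectations on the $k^2$-domain, since the covering probability is computable exactly by iterating over the $\binom{k^2}{k}$ size-$k$ subsets of $[k^2]$, and each greedy step costs only $k^{O(k)}$ time.

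The final family is $F=\{\,g\circ h : h\in\mathcal{H},\ g\in\mathcal{G}\,\}$. Its size is $|\mathcal{H}|\cdot|\mathcal{G}|=\poly(n)\cdot e^k k^{O(1)}$ which, after trimming again by greedy covering in the large-domain setting (with the small candidate pool $F$ itself), can be brought down to $e^k\cdot k^{O(1)}\log n$; correctness follows because for every $S$, some $h\in\mathcal{H}$ is injective on $S$, making $h(S)$ a size-$k$ subset of $[k^2]$, and then some $g\in\mathcal{G}$ is surjective onto $[k]$ when restricted to $h(S)$. Total running time is dominated by the $k^{O(k)}$ enumeration on the small domain times $\poly(n)$ for handling $\mathcal{H}$ and the final greedy trimming.

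The main obstacle is matching the optimal $e^k$ factor rather than the easier $k^k$ factor: a naive derandomization using only $k$-wise independence gives covering probability $k!/k^k\approx e^{-k}/\sqrt{k}$ only in the fully uniform case, and one must verify that the $k^2$-domain enumeration plus greedy selection actually preserves this $e^{-k}$ covering rate rather than degrading it. Handling this tightness carefully — in particular, showing that the greedy step on the small domain loses only a $\poly(k)$ factor over the optimum — is the delicate quantitative part of the argument.
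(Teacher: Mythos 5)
The paper itself does not prove this proposition; it invokes it as a citation to Naor--Schulman--Srinivasan (NSS), so there is no ``paper's own proof'' to match. Judged on its own terms, your sketch has the right probabilistic target ($k!/k^k \approx \sqrt{k}\,e^{-k}$, hence $e^k k^{O(1)}\log n$ suffices by a union bound) and the right high-level template (collapse $[n]$ to a $\poly(k)$-sized domain, then work on the small domain), but there are two concrete gaps that block the stated time and size bounds.

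First, your small-domain step enumerates all $k^{k^2}$ functions $g:[k^2]\to[k]$, which vastly exceeds the $k^{O(k)}$ budget; ``composition and pruning'' does not fix this, because the greedy/conditional-expectations step must scan the candidate pool, and that pool already has size $k^{k^2}$. The repair is to \emph{not} enumerate all functions: instead take the candidate pool to be a $k$-wise independent family of functions $[k^2]\to[k]$, which has size only $k^{O(k)}$. Since ``$f$ restricted to a fixed $k$-set $S$ is a bijection onto $[k]$'' is an event that depends only on the $k$-tuple $(f(s))_{s\in S}$, a $k$-wise independent $f$ hits each $S$ with exactly the uniform probability $k!/k^k$. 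Greedy covering over the $\binom{k^2}{k}=k^{O(k)}$ small-domain $k$-sets from this $k^{O(k)}$-sized pool is then genuinely within budget.

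Second, your final ``trimming'' step is not affordable. Composing a $\poly(n)$-sized pairwise-independent outer family $\mathcal{H}$ with the inner family $\mathcal{G}$ yields a perfect hash family of size $\poly(n)\cdot e^k k^{O(1)}$, and to trim it to $e^k k^{O(1)}\log n$ by greedy covering one would have to iterate over all $\binom{n}{k}\approx n^k$ size-$k$ subsets of $[n]$ --- this is $n^{\Theta(k)}$, not $k^{O(k)}\cdot\poly(n)$. The $\log n$ factor in NSS is not obtained by post-hoc pruning; it is built in from the start via a splitter recursion, where each stage reduces the domain using an explicitly constructed family of size $\poly(k)\log n$ (coming from small-bias spaces / codes), so that the product over $O(\log k)$ stages stays at $e^k k^{O(1)}\log n$. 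Put differently: the hard part of NSS that your sketch elides is precisely getting the \emph{outer} reduction $[n]\to[\poly(k)]$ down to size $\poly(k)\log n$ rather than $\poly(n)$. As a side remark, if one is content with a family of size $\poly(n)\cdot e^k k^{O(1)}$ --- which your approach, with the first gap repaired, does deliver --- this would still be adequate for the FPT reductions in this paper, but it is weaker than the proposition as stated.
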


\begin{lemma}
	\label{pr:kss}
	For any $\eps > 0$, there exists a $c > 0$ such that a $\ksumfree$ set $S$ of 
	size $n$ with $S \subset [0, (ckn)^{1 + \eps}]$ can be constructed in 
	$\poly(n)$ time.
\end{lemma}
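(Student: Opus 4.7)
The plan is to adapt Behrend's classical construction of $3$-AP-free sets to the $\ksumfree$ setting. Fix a dimension $d$ and a per-coordinate bound $m$ (both to be tuned in terms of $\eps$, $k$ and $n$), and let $b = (k-1)m + 1$ serve as the base of an integer encoding. Each vector $\mathbf{v} = (v_0, \ldots, v_{d-1})$ with coordinates $v_j \in \ints{0}{m-1}$ is mapped to the integer $\pi(\mathbf{v}) = \sum_{j=0}^{d-1} v_j \cdot b^j \in \ints{0}{b^d - 1}$. The key observation is that this digit restriction prevents carries on either side of the equation $\sum_{i=1}^{k-1} x_i = (k-1) x_k$: both a sum of $k-1$ digits and $(k-1)$ times a single digit are at most $(k-1)(m-1) < b$. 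Consequently, if the encoded integers satisfy this $\ksumfree$-equation, the same identity must hold coordinatewise in $\mathbb{Z}^d$.

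Next I would restrict attention to a single ``sphere'' of vectors. For each $r \in \ints{0}{d(m-1)^2}$, define $V_r = \{\mathbf{v} : \|\mathbf{v}\|^2 = r\}$, and by pigeonhole over the $O(dm^2)$ possible radii, some $V_r$ must contain at least $m^d/(dm^2) = m^{d-2}/d$ vectors. I would take $S \deq \pi(V_r)$ for such an $r$. To verify the $\ksumfree$ property, suppose $\mathbf{v}_1, \ldots, \mathbf{v}_k \in V_r$ satisfy $\sum_{i=1}^{k-1} \mathbf{v}_i = (k-1)\mathbf{v}_k$. Cauchy--Schwarz gives
\[
(k-1)^2 r = \|(k-1)\mathbf{v}_k\|^2 = \Bigl\|\sum_{i=1}^{k-1} \mathbf{v}_i\Bigr\|^2 \leq (k-1) \sum_{i=1}^{k-1} \|\mathbf{v}_i\|^2 = (k-1)^2 r,
\]
so equality must hold, which (since all vectors have the same norm) forces $\mathbf{v}_1 = \cdots = \mathbf{v}_{k-1}$; the linear relation then yields $\mathbf{v}_k$ equal to these as well, and hence $\pi(\mathbf{v}_1) = \cdots = \pi(\mathbf{v}_k)$, as required.

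For the quantitative bound I would choose $d = \lceil 2/\eps \rceil + 2$, so that $d/(d-2) \leq 1 + \eps$, and then pick the smallest $m$ with $m^{d-2}/d \geq n$, giving $m = O((dn)^{1/(d-2)})$. The resulting $S$ has size at least $n$, and its elements lie in $\ints{0}{b^d - 1}$ with $b^d \leq (km)^d = k^d \cdot O((dn)^{d/(d-2)}) \leq (c \cdot kn)^{1+\eps}$ for an appropriate $c$ depending only on $\eps$. The construction is clearly polynomial time: enumerate the $m^d = \poly(n)$ candidate vectors in $V$, bucket them by squared norm, and output any bucket of size at least $n$, mapping back to $\mathbb{Z}$ via $\pi$.

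The main delicate step is balancing the parameters: the base $b$ must be $\Theta(km)$ to prevent carries, but the total range $b^d$ then carries a factor of $k^d$ that must be absorbed into $(ck)^{1+\eps}$. This tension is precisely what dictates the choice $d = \Theta(1/\eps)$, and it is the reason the constant $c$ in the statement ends up depending on $\eps$.
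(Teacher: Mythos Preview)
Your proposal is essentially identical to the paper's own argument: both adapt Behrend's construction by encoding bounded-digit vectors in base $\Theta(km)$ to prevent carries in the equation $\sum_{i<k} x_i = (k-1)x_k$, restrict to a single sphere $\|\mathbf v\|^2 = r$, use pigeonhole over the $O(dm^2)$ radii to find a large sphere, and invoke the equality case of Cauchy--Schwarz to force all vectors equal. Your parameter choice $d = \lceil 2/\eps\rceil + 2$ matches the paper's $m = 2/\eps + 2$ (with the roles of the dimension and digit-bound variables swapped), and the size and range computations are the same.
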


\begin{definition}[$\ksumfree$ set]
	For any $k \geq 2$, a set $S \subseteq \Z$ is a {\em $\ksumfree$ set} if and 
	only if for all (not necessarily distinct) $x_1, \ldots, x_k \in S$, $\sum_{i 
	\in [k-1]} x_i = (k-1) \cdot x_k$ implies that $x_1 = \cdots = x_k$.
\end{definition}

We outline a construction of a $\ksumfree$ set, derived from a small 
modification to Behrend's original construction~\cite{Beh46}.

\paragraph{Constructing $\ksumfree$ Sets.} For a vector $\boldv$, let $\| \boldv 
\|$ represent the $\ell_2$ norm of $\boldv$, and let $| \langle \boldv_1, 
\boldv_2 \rangle |$ represent the dot product of two vectors $\boldv_1$ and 
$\boldv_2$. Let $p = (k-1)d-1$. Consider the set $S_r(m,d)$ consisting of all 
integers $x$ that can be written as $x = \sum_{i=0}^{m-1} a_i p^{i}$, where for 
all $i \in [0,m-1]$, $a_i \in [0,d-1]$ and $\| \langle a_0, \ldots, a_{m-1} 
\rangle \| = \sqrt{r}$. For an integer $x \in S_r(m,d)$ of the form $x = \sum_{i 
\in [0,m-1]} a_i \cdot p^i$, let $\| x \|$ represent the $\ell_2$ norm of the 
vector $\langle a_i \rangle_{i \in [0,m-1]}$. By definition, $\| x \| = 
\sqrt{r}$.

\begin{claim} \label{cauchy}
	For any vectors $\boldv_1, \ldots, \boldv_k$, if $\| \boldv_1 \| = \cdots = \| 
	\boldv_k \|$ and $\sum_{i \in [k]} \| \boldv_i \| = \left\| \sum_{i \in [k]} 
	\boldv_i \right\|$, then $\boldv_1 = \cdots = \boldv_k$.
\end{claim}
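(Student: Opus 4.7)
The plan is to reduce the statement to the equality case of the Cauchy--Schwarz inequality, exploiting the assumption that all the $\boldv_i$ share a common norm. The identity to exploit is the expansion of the squared norm of a sum: writing $r \deq \|\boldv_i\|$ for the common value,
\[
\left\| \sum_{i=1}^k \boldv_i \right\|^2 = \sum_{i=1}^k \|\boldv_i\|^2 + 2 \sum_{1 \le i < j \le k} \langle \boldv_i, \boldv_j \rangle = k r^2 + 2 \sum_{i<j} \langle \boldv_i, \boldv_j\rangle,
\]
while by hypothesis this also equals $(k r)^2 = k^2 r^2$. Rearranging yields $\sum_{i<j} \langle \boldv_i, \boldv_j \rangle = \binom{k}{2} r^2$.

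The next step is to combine this with the Cauchy--Schwarz inequality $\langle \boldv_i, \boldv_j \rangle \le \|\boldv_i\|\|\boldv_j\| = r^2$, applied to each of the $\binom{k}{2}$ pairs. Since the pairwise inner products are bounded above by $r^2$ and their sum is exactly $\binom{k}{2} r^2$, every inner product must achieve its upper bound. Thus $\langle \boldv_i, \boldv_j\rangle = \|\boldv_i\|\|\boldv_j\|$ for every $i < j$, which is precisely the equality case of Cauchy--Schwarz.

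From there, the standard consequence of equality in Cauchy--Schwarz gives, for each pair $(i,j)$, a non-negative scalar $\lambda_{i,j} \ge 0$ with $\boldv_j = \lambda_{i,j} \boldv_i$ (treating the trivial case $r = 0$ separately, where all vectors are zero). Taking norms and using $\|\boldv_i\| = \|\boldv_j\| = r$ forces $\lambda_{i,j} = 1$, hence $\boldv_i = \boldv_j$. Since this holds for every pair, all vectors coincide, completing the proof. The only subtlety to watch for is cleanly handling the degenerate case $r = 0$, but there $\|\boldv_i\| = 0$ for all $i$ means every $\boldv_i$ is the zero vector and the conclusion is immediate; no step of the argument is delicate beyond that.
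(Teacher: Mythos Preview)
Your proof is correct and follows essentially the same route as the paper's: expand $\left\| \sum_i \boldv_i \right\|^2$, compare with $(\sum_i \|\boldv_i\|)^2$, and deduce that every pairwise inner product attains the Cauchy--Schwarz bound, forcing all vectors to coincide. If anything, your write-up is slightly more careful than the paper's in explicitly noting that the scalar in the equality case is nonnegative and in separately dispatching the degenerate case $r=0$.
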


\begin{proof}
	Note that
	\[ \left\| \sum_{i \in [k]} \boldv_i \right\|^2 = \sum_{i \in [k]} \langle 
	\boldv_i, \boldv_i \rangle  + 2 \sum_{i, j \in [k], i < j} | \langle \boldv_i, 
	\boldv_j \rangle | = \sum_{i \in [k]} \| \boldv_i \|^2 + 2 \sum_{i, j \in [k], 
	i < j} | \langle \boldv_i, \boldv_j \rangle |. \]

	By the Cauchy-Schwarz inequality, $| \langle \boldv_i, \boldv_j \rangle | \leq 
	\| \boldv_i \| \cdot \| \boldv_j \|$. Thus, in order for the above quantity to 
	be equal to $ \left(  \sum_{i \in [k]} \left\| \boldv_i \right\| \right)^2$, 
	it must be the case that $| \langle \boldv_i, \boldv_j \rangle | = \| \boldv_i 
	\| \cdot \| \boldv_j \|$ for all $i,j\in[k]$. Hence, also by Cauchy-Schwarz, 
	$\boldv_i$ and $\boldv_j$ are linearly dependent. Then, using the fact that 
	$\| \boldv_i \| = \| \boldv_j \|$, it follows that $\boldv_i = \boldv_j$ for 
	all pairs $i,j \in [k]$, which completes the proof.
\end{proof}

\begin{lemma}
	For all integers $k \geq 2$, for all $\ell \in [0, m(d-1)^2]$, $S_\ell$ is 
	$\ksumfree$.
\end{lemma}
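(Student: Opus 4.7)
The plan is to pass from the integer equation $\sum_{j=1}^{k-1} x_j = (k-1)x_k$ to a coordinate-wise vector equation among the associated digit vectors, and then finish using Claim~\ref{cauchy}. Write each $x_j \in S_\ell(m,d)$ as $x_j = \sum_{i=0}^{m-1} a_{j,i}\, p^i$ with $a_{j,i} \in [0,d-1]$, and set $\boldv_j = \langle a_{j,0}, \ldots, a_{j,m-1}\rangle \in \mathbb{Z}^m$, so that $\|\boldv_j\|^2 = \ell$. The case $k=2$ is vacuous, since the hypothesis is just $x_1=x_2$, so I assume $k\ge 3$ from now on.

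The crucial step is a no-carry argument that lifts the integer identity to the identity $\sum_{j=1}^{k-1}\boldv_j = (k-1)\boldv_k$ in $\mathbb{Z}^m$. This is precisely where the choice of base $p=(k-1)d-1$ pays off: on the left side of $\sum_{j=1}^{k-1} x_j = (k-1) x_k$, each coordinate sum satisfies $\sum_{j=1}^{k-1} a_{j,i} \le (k-1)(d-1) = p - (k-2) < p$; on the right, $(k-1)a_{k,i} \le (k-1)(d-1) < p$ for exactly the same reason. Both sides are therefore represented as $\sum_i c_i p^i$ with digits $c_i \in [0,p-1]$, and uniqueness of the base-$p$ expansion forces $\sum_{j=1}^{k-1} a_{j,i} = (k-1) a_{k,i}$ for every $i$, which is the desired vector equation.

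Now I take $\ell_2$ norms. The right-hand side has norm exactly $(k-1)\sqrt{\ell}$, while the triangle inequality gives $\bigl\|\sum_{j=1}^{k-1}\boldv_j\bigr\| \le \sum_{j=1}^{k-1}\|\boldv_j\| = (k-1)\sqrt{\ell}$, so equality must hold. Since $\|\boldv_1\| = \cdots = \|\boldv_{k-1}\| = \sqrt{\ell}$, applying Claim~\ref{cauchy} to the $k-1$ vectors $\boldv_1,\ldots,\boldv_{k-1}$ yields $\boldv_1 = \cdots = \boldv_{k-1}$. Plugging back into $\sum_{j=1}^{k-1}\boldv_j = (k-1)\boldv_k$ gives $(k-1)\boldv_1 = (k-1)\boldv_k$, hence $\boldv_k = \boldv_1$ as well, and therefore $x_1 = \cdots = x_k$. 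The main obstacle is really only the bookkeeping in the no-carry step---after that, the geometric content is supplied entirely by the already-proved Cauchy--Schwarz claim.
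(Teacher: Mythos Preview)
Your proof is correct and follows essentially the same approach as the paper's: pass to the digit vectors, use the no-carry property afforded by the choice $p=(k-1)d-1$ to turn the integer equation into a vector equation, and then apply Claim~\ref{cauchy}. The main difference is that you spell out the no-carry step and the recovery of $x_k$ explicitly, whereas the paper compresses both into the single line ``$\|\sum_{i\in[k-1]} x_i\| = (k-1)\cdot\|x_k\|$'' and an immediate appeal to Claim~\ref{cauchy}; your handling of the $k=2$ boundary case separately is also a nice touch, since there the bound $(k-1)(d-1)=p$ is not strict.
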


\begin{proof}
	Let $x_1, \ldots, x_k \in S_\ell(m,d)$ be such that $\sum_{i \in [k-1]} x_i = 
	(k-1) \cdot x_k$. Then, $\| \sum_{i \in [k-1]} x_i \| = (k-1) \cdot \| x_k 
	\|$. Since $\| x_1 \| = \| x_2 \| = \cdots = \| x_k \|$, it follows that
	\[ \sum_{i \in [k-1]} \| x_i \| = \left\| \sum_{i \in [k-1]} x_i \right\|. \]
	Hence, by Claim~\ref{cauchy}, it follows that $x_1 = \cdots = x_k$.
\end{proof}

There are $d^m$ possible settings for $a_0, \ldots, a_{m-1}$ where each $a_i \in 
[0,d-1]$, and the value $\| \langle a_i \rangle_{i \in [0,m-1]} \|^2$ lies 
within the range $[0, m(d-1)^2]$. Hence, by the pigeonhole principle, there 
exists some $r \in [0, m(d-1)^2]$ such that $|S_r(m,d)| \geq \frac{d^m}{ 
m(d-1)^2 + 1} \geq d^{m-2} / m$. Note that $S_r(m,d) \subset [0, p^m]$. Hence, 
to obtain a $\ksumfree$ set of size $n$, we have that $d = (m \cdot 
n)^{\frac{1}{m-2}}$. Recall that $p = \Theta(k d)$, and so $p^m = O(k m 
n)^{\frac{m}{m-2}}$. For any $\eps > 0$, we can set $m = 2/\eps + 2$ so that 
$p^m = (ckn)^{1 + \eps}$ for some constant $c > 0$. Hence, $S_r(m,d) \subset [0, 
(ckn)^{1 + \eps}]$. For any $n$, we can compute, for each $\ell \in [0, m 
(d-1)^2]$, the exact size of $S_\ell(m,d)$, returning the $\ell$ for which 
$|S_\ell(m,d)|$ is maximized. Note that $m (d-1)^2 = O(n^\eps)$, and the 
construction of $S_\ell(m,d)$ requires time linear in $|S_\ell(m,d)|$. 
Lemma~\ref{pr:kss} follows.

\subsection{Background on Parameterized Complexity}

For an exposition of fixed  parameter tractability, reducibility, and the 
$\W$-hierarchy, we refer the reader to~\cite{DF92,DF95,FG06}. Here we give a 
brief overview.

A parameterized problem is defined to be a subset of $\{0,1\}^{\star} \times 
\N$. Let $L, L'$ be parameterized problems. $L$ is in the class $\FPT$ if there 
is an algorithm $A$, constant $c$, and computable function $f : \N \rightarrow 
\N$, such that on all inputs $y = (x,k)$, $A(y)$ decides whether $y$ is in $L$ 
and runs in time at most $f(k)\cdot |x|^c$.  A \emph{FPT reduction} from $L$ to 
$L'$ is an algorithm $R$ mapping $\{0,1\}^{\star} \times \N$ to $\{0,1\}^{\star} 
\times \N$, such that for all $y=(x,k)$, $R(y) \in L'$ if and only if $y \in L$, 
$R(y)$ runs in $f(k)\cdot |x|^c$ time (for some $c$ and $f$), and $R(y)=(x',k')$ 
where $k' \leq g(k)$ for some computable function $g$.

Although it is not the original definition, it is equivalent to define $\W[1]$ 
to be the class of parameterized problems which have an FPT reduction to 
$k$-Clique~\cite{DF95,FG06}. 

\section{A Deterministic FPT Reduction from $\kSUM$ to $\kClique$} 
\label{reducing}

In this section we show how, by assuming a plausible circuit lower bound 
assumption, we are able to strengthen our result from 
Section~\ref{main-reduction} to a reduction from general $\kSUM$ to $\kClique$ 
(as opposed to only reducing $\kSUM$ on numbers in $[-n^{2k},n^{2k}]$ to 
$\kClique$).

To do this, we give a randomized process which takes a $\kSUM$ instance and 
outputs a collection of $\kSUM$ instances on numbers in $[-n^{2k},n^{2k}]$ 
which, with high probability, are such that a member of the collection contains 
a solution if and only if the original $\kSUM$ instance contains a solution. 
Then, we apply a theorem which allows us to derandomize this reduction given the 
appropriate circuit lower bound.

\paragraph{A Randomized Weight Reduction.} We first define a randomized process 
(cf.~\cite[Definition~4.1]{KM02}) consisting of an algorithm $F$ along with a 
predicate $\pi$. In what follows, the algorithm $F$ will represent a 
(randomized) oracle reduction from general $\kSUM$ to $\kSUM$ on numbers in 
$[-n^{2k},n^{2k}]$ which succeeds on some random bit sequences, and the 
predicate $\pi$ represents whether or not the input random bit sequence is such 
that $F$ produces a valid oracle reduction. Let $z = O(k \cdot \poly \log n)$, 
and let $\calD$ represent the domain of $\kSUM$ instances on $n$ integers.

Let $F : \calD \times \{0,1\}^z \to \calD^k$ be a process which takes as input 
$x$ a $\kSUM$ instance along with a random bit sequence $r \in \{0,1\}^z$. Let 
the instance $x$ describe a $\kSUM$ instance consisting of a set $S$ of $n$ 
integers in the range $[0,M]$ with target value $0$ (without loss of 
generality). $F$ uses the sequence of random bits $r$ to choose a prime $p \in 
[2, d n^k \log n \log(kM)]$ by picking random integers within the interval until 
a prime is obtained. Let $S'$ represent the set of integers derived by taking 
each integer in $S$ modulo $p$. The output of $F$ is a collection of $k$ 
instances of $\kSUM$, each with base set $S'$, and the $i^\th$ instance has 
target value $i \cdot p$, for  $i \in [0,k-1]$.

Let $\pi : \calD \times \{0,1\}^z \to \{0,1\}$ be a predicate which takes as 
input $x$ a $\kSUM$ instance along with a random bit sequence $r \in \{0,1\}^z$ 
and outputs a single bit. The predicate $\pi(x,r)$ runs $F(x,r)$ to obtain a 
collection $C$ of $k$ instances of $\kSUM$, and outputs $1$ if and only if the 
$\kSUM$ instance $x$ has a solution and there exists some $\kSUM$ instance in 
$C$ which does not have a solution, or, the original $\kSUM$ instance does not 
have a solution and none of the $\kSUM$ instances in $C$ have a solution.
The following lemma is folklore, and also informally stated in~\cite{BDP08}, but 
we provide a proof here for completeness, as it also justifies 
Proposition~\ref{random-prime-prop}.

\begin{lemma} \label{random-prime}
	Let $k,n,M,M' > 0$ be integers such that $M' = O( k n^k \log n \log(kM))$, and 
	let $d > 0$ be a sufficiently large constant. Let $C$ be the collection of 
	$\kSUM$ instances output by $F$ on input $x$, a $\kSUM$ instance on $n$ 
	integers in the range $[0,M]$, and a random bit sequence $r \in \{0,1\}^z$. 
	Then, each $\kSUM$ instance in $C$ is on $n$ integers in the range $[0,M']$, 
	and $\pi(x,r) = 1$ with probability at least $1 - 1/(cd)$. for some fixed 
	constant $c > 0$.
\end{lemma}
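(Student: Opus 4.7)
The plan is to verify two separate claims: (i) the range claim that each output instance consists of $n$ integers in $[0,M']$, which is essentially by construction, and (ii) the correctness claim that $\pi(x,r)=1$ with probability at least $1-1/(cd)$. For (i), note that after reducing each $x_i$ modulo $p$ we obtain an integer in $[0,p-1]$, and $p\le dn^k\log n\log(kM)$. Since $M'=O(kn^k\log n\log(kM))$ with a sufficiently large hidden constant, this guarantees $p-1\le M'$, so each of the $k$ derived instances lives in the correct range. (The sampling of $p$ itself succeeds with very high probability within $z=O(k\cdot\poly\log n)$ random bits, because by the prime number theorem the density of primes in the interval is $\Omega(1/\log N)=\Omega(1/(k\log n))$, so polylogarithmically many independent uniform samples suffice.)

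For (ii), the plan is to condition on the sampled prime $p$ and analyze completeness and soundness separately. In the completeness direction, if some $k$-subset $S$ satisfies $\sum_{i\in S}x_i=0$ in the original instance, then $\sum_{i\in S}(x_i\bmod p)\equiv 0\pmod p$ and lies in $[0,k(p-1)]$, so it equals $jp$ for some $j\in\{0,1,\ldots,k-1\}$; hence the $j$-th instance in $C$ has $S$ as a witness, independently of the choice of $p$. For soundness, suppose $x$ has no $\kSUM$ solution but some instance in $C$ does, with witness $S$ and target $jp$. Then $p\mid T_S$ where $T_S:=\sum_{i\in S}x_i\ne 0$ and $|T_S|\le kM$, so $p$ is ``bad'' for $S$ in the sense of dividing a fixed nonzero integer of magnitude at most $kM$.

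The remaining step is to union-bound over bad primes. Each nonzero $T_S$ with $|T_S|\le kM$ has at most $\log_2(kM)$ distinct prime divisors, and there are at most $\binom{n}{k}\le n^k$ subsets $S$, so the total number of bad primes is at most $n^k\log_2(kM)$. By the prime number theorem, the number of primes in $[2,N]$ with $N=dn^k\log n\log(kM)$ is $\Theta(N/\log N)$, and $\log N=\Theta(k\log n)$ for the relevant parameter regime, giving $\Omega(dn^k\log(kM)/k)$ primes. A uniformly sampled prime is therefore bad with probability at most $O(k/d)$, which can be absorbed into $1/(cd)$ for an appropriate constant $c$. The main obstacle is this last probability estimate: making the PNT bound quantitatively tight enough that the ratio ``bad primes / total primes'' is genuinely $O(1/d)$ requires careful handling of the $\log N$ factor, since $\log N$ depends on both $k\log n$ and $\log\log M$, and one needs to check that $\log\log M$ is dominated by $k\log n$ in the regime of interest (otherwise the stated $M'$ must be adjusted to compensate). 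Once this is established, combining with the completeness direction yields the claimed bound on $\Pr[\pi(x,r)=1]$.
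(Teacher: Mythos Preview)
Your proposal is correct and follows essentially the same approach as the paper: completeness is immediate since a genuine zero-sum becomes a multiple of $p$ in $[0,k(p-1)]$, and soundness comes from bounding the prime divisors of each nonzero $k$-subset sum by $\log(kM)$, union-bounding over the $\binom{n}{k}\le n^k$ subsets, and comparing against the $\Theta(N/\log N)$ primes in $[2,N]$ via the prime number theorem. Your treatment of the $\log N$ denominator is in fact slightly more careful than the paper's, which simply asserts that there are at least $cd\,n^k\log(kM)$ primes in the interval without tracking the factor of $k$ or the $\log\log M$ term you flag.
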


\begin{proof}
	Let $S$ be the set of integers of the $\kSUM$ instance. For a sufficiently 
	large constant $d > 0$, for a randomly chosen prime $p \in [2,d \cdot n^k \log 
	n \log(kM)]$, consider the new set $S'$ where we take all integers of $S$ 
	modulo $p$. Clearly, if $S$ has a $\kSUM$ solution, then $S'$ also has a 
	solution, and hence, there exists a member of the collection of $\kSUM$ 
	instances output by $F$ which contains a solution. For any $k$-tuple of 
	integers in $[0,M]$ with nonzero sum $s$, we have $s \in [kM]$ so $s$ has at 
	most $\log(kM)$ prime factors. By the prime number theorem, for sufficiently 
	large $n$, there are at least $c d \cdot n^k \log(kM)$ primes in this interval 
	for a fixed constant $c > 0$, and so the probability that $p$ is a prime 
	factor of $s$ is at most \[\frac{\log(kM)}{c d n^k \log(kM)} \leq 
	\left(\frac{1}{c d \cdot n^k}\right).\] By a union bound over all $k$-tuples 
	of numbers from $S'$, the probability of a false positive in $S'$ is at most 
	$1/(cd)$. We have shown that if the $\kSUM$ instance $S$ with target value $0$ 
	does not contain a solution, then with probability at least $1 - 1/(cd)$ over 
	the choice of a random prime $p$, $S$ does not contain a $k$-tuple of integers 
	which sum to a multiple of $p$, and hence, there does not exist a member of 
	the collection of $\kSUM$ instances output by $F$ which contains a solution.

	Finally, we can remove the ``modulo $p$'' constraint from $S'$ as follows.
	Treating all elements in $S'$ as integers in $[0,p-1]$, for each $z \in 
	\{-k\cdot p,\ldots,-p,0,p,\ldots,k\cdot p\}$ we reduce to checking whether 
	there are $x_1,\ldots,x_k \in S'$ such that $\sum_i x_i = z$ (where this sum 
	is over the integers). Such a $k$-tuple exists if and only if $\sum_i x_i$ 
	equals $0 \text{ mod } p$.
\end{proof}

\paragraph{Derandomizing the Reduction.} Let $B$ be an oracle for $\SAT$, and 
assume there exists a function $f \in \E$ with $B$-oracle circuit complexity 
$2^{\eps n}$ for some $\eps > 0$. We note that, for all $\kSUM$ instances $x$ 
and random bit sequences $r$, $\pi(x,r)$ can be decided by a $B$-oracle circuit 
of size $n^a$ for some positive constant $a$, since we can construct a circuit 
which creates and then solves the $\kSUM$ instances to check the predicate. We 
also have by Lemma~\ref{random-prime} that for all $x$, for uniformly sampled $r 
\in \{0,1\}^z$, $\pr[\pi(x,r) = 1]$ with probability $1 - 1/(cd)$, where $c$ is 
a universal constant and $d$ can be chosen to be arbitrarily large. We now apply 
the following derandomization theorem due to Klivans and van 
Melkebeek~\cite{KM02}.

\begin{theorem}[{cf.~\cite[Theorem 4.4]{KM02}}] \label{derand}
	Let $B$ be an oracle, $b$ a positive constant, and $\ell : \N \to \N$ a 
	constructible function. Let $(F, \pi)$ be a randomized process using a 
	polynomial number of random bits such that $B$ can efficiently check $(F, 
	\pi)$. If there exists a Boolean function $f \in \E$ such that $C_f^B(\ell(n)) 
	= \Omega(n)$, then there exists a function $G$ computable in $\E$ and a 
	constructible function $s(n) = O(\ell^2(n^{O(1)}) / \log n)$ such that for any 
	input $x$ of length $n$,
	\[ \left| \pr_r[ \pi(x,r) = 1] - \pr_s[ \pi(x,G(s)) = 1] \right| \leq O(1 / 
	n^b) \]
	where $r$ is uniformly distributed over $\{0,1\}^{r(n)}$ and $s$ over 
	$\{0,1\}^{s(n)}$.
\end{theorem}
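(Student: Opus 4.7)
The plan is to prove Theorem~\ref{derand} by constructing a pseudorandom generator $G$ based on a Nisan--Wigderson style construction, relativized throughout to the oracle $B$, and using the hard function $f$ as the ``hard predicate'' that drives the generator. The key observation is that, because $B$ efficiently checks $(F,\pi)$, for each fixed input $x$ of length $n$ the map $r \mapsto \pi(x,r)$ is computed by a $B$-oracle circuit of size polynomial in $n$; so it suffices to build a generator that fools \emph{all} $B$-oracle circuits of a fixed polynomial size, with error $O(1/n^b)$.

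First I would perform a \textbf{worst-case to average-case reduction relativized to $B$}. Starting from $f \in \E$ with $C_f^{B}(\ell(n)) = \Omega(n)$, apply a relativizing hardness amplification (the Impagliazzo--Wigderson chain: Babai--Fortnow--Nisan--Wigderson downward reduction, then the XOR lemma, all carried out with $B$-oracle gates allowed on both sides of every reduction) to obtain a function $f' \in \E$ on $\ell(n^{O(1)})$ input bits such that no $B$-oracle circuit of size $\ell(n^{O(1)})^{c}$ agrees with $f'$ on more than a $1/2 + 1/\ell(n^{O(1)})$ fraction of inputs, for a suitable constant $c$. The parameter bookkeeping here is what eventually yields the $\ell^{2}(n^{O(1)})/\log n$ seed length.

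Second, I would \textbf{construct an NW-style generator} $G$ from $f'$. Using a combinatorial design $\{S_1,\dots,S_m\} \subseteq [s(n)]$ of set size $\log \ell(n^{O(1)})$ and pairwise intersections at most $\log n$ (such designs are constructible in $\E$), define
\[
G(s) \;=\; \bigl( f'(s|_{S_1}),\, f'(s|_{S_2}),\, \dots,\, f'(s|_{S_m}) \bigr),
\]
where $m$ is at least the circuit size $n^{a}$ that the predicate $\pi$ requires plus a safety margin for the final error bound. Standard analysis of designs gives $s(n) = O(\ell^{2}(n^{O(1)})/\log n)$, matching the claimed seed length, and $G$ is computable in $\E$ since $f'$ is.

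Third, I would \textbf{run the Nisan--Wigderson hybrid argument with $B$-oracle gates}. Suppose towards contradiction that for some $x$ of length $n$,
\[
\left|\, \pr_r[\pi(x,r)=1] \;-\; \pr_s[\pi(x,G(s))=1] \,\right| \;>\; C/n^{b}
\]
for every large constant $C$. Since $\pi(x,\cdot)$ is computed by a $B$-oracle circuit $D$ of size $n^{a}$, a hybrid argument produces an index $i$ and a distinguisher, which (using the pairwise intersection bound from the design and hard-wiring the other $f'(s|_{S_j})$ values) yields a $B$-oracle circuit of size $\mathrm{poly}(n^a) \cdot 2^{\log n} = \mathrm{poly}(n)$ that predicts $f'$ with advantage $\Omega(1/(m n^{b}))$. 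By choosing $b$ and the parameters of the amplification, this contradicts the average-case hardness of $f'$ established in the first step. The main obstacle is the careful relativization of every component of the Impagliazzo--Wigderson amplification to $B$ while preserving the quantitative hardness, and synchronizing the sizes so that the final distinguisher circuit is smaller than the hardness threshold $\ell$ for $f'$.
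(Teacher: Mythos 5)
The paper does not prove Theorem~\ref{derand}; it is stated with ``cf.'' and imported verbatim as Theorem~4.4 of Klivans and van Melkebeek~\cite{KM02}, and the authors use it as a black box in the derandomization of the weight-reduction step. So there is no ``paper's own proof'' to compare against. Your sketch does track the actual proof in~\cite{KM02} at the right level of abstraction: KM02 indeed observe that the Nisan--Wigderson generator and the Impagliazzo--Wigderson worst-case to average-case amplification both relativize, so one can carry $B$-oracle gates through every step, build the NW generator from the $B$-hard function, and run the hybrid argument to fool $B$-oracle circuits, which by hypothesis includes the circuit computing $\pi(x,\cdot)$. That is the content of the theorem and you have identified all three stages correctly.

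One parameter in your write-up is off and worth flagging, since it is exactly the kind of bookkeeping the theorem statement is encoding. In the design $\{S_1,\dots,S_m\}\subseteq[s(n)]$, the set sizes should be the \emph{input length} $\ell'=\ell(n^{O(1)})$ of the amplified hard function $f'$, not $\log \ell(n^{O(1)})$; the pairwise intersection bound $O(\log n)$ is what you want, and a design with $m=n^{O(1)}$ sets of size $\ell'$ and intersections $O(\log n)$ lives in a universe of size $O(\ell'^2/\log n)$, which is precisely the claimed seed length $s(n)=O(\ell^2(n^{O(1)})/\log n)$. With set size $\log\ell'$ the seed-length arithmetic would not come out. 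Aside from that slip the outline is sound, but since the theorem is cited rather than proved in this paper, what you have written is a summary of~\cite{KM02} rather than an alternative to anything in the present manuscript.
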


By applying Theorem~\ref{derand} with $\ell(n) = \log n$, there exists a 
constant $s > 0$ and an efficiently computable function $G$ computable in with 
seed length $s \log n$ such that for any input $x$ of length $n$, $| 
\pr[\pi(x,r) = 1] - \pr[\pi(x,G(\sigma)) = 1]| = O(1/n^b)$. By choosing $d = 
\Omega(n)$, we have that $\pr[\pi(x,G(\sigma)) = 1] \geq 1 - 1/n$.

\paragraph{The Deterministic FPT Reduction.} Given an instance $x$ of $\kSUM$, 
for each $\sigma \in \{0,1\}^{s \log n}$, let $C_\sigma$ be the collection of 
instances of $\kSUM$ output by $F(x,G(\sigma))$. Note that $C_\sigma$ is in fact 
a collection of instances of $\kSUM$ on numbers in $[-n^{2k},n^{2k}]$ by 
Lemma~\ref{random-prime}. Hence, as described in Section~\ref{main-reduction}, 
we can reduce each member of $C_\sigma$ to an instance of $\kClique$---let 
$S_\sigma$ represent this collection of $\kClique$ instances, obtained by 
applying the reduction to each member of $C_\sigma$. Let $S^*$ be a family of 
$\kClique$ instances obtained by taking the union of $S_\sigma$ over all $\sigma 
\in \{0,1\}^{s \log n}$. The original $\kSUM$ instance $x$ contains a solution 
if and only if at least a $1/n$ fraction of the members of $S^*$ contain a 
$\kClique$.

To see the correctness of this deterministic reduction, note that at most a 
$1/n$ fraction of $\sigma \in \{0,1\}^{s \log n}$ are such that 
$\pi(x,G(\sigma)) = 0$. For all other seeds $\sigma$, $\pi(x,G(\sigma))=1$ and 
the process $F$ correctly reduces its input $x$ to a collection of $\kSUM$ 
instances. Therefore, if $x$ contains a solution, then all members of $S^*$ will 
contain a $\kClique$, and if $x$ does not contain a solution, then at most a 
$1/n$ fraction of the members of $S^*$ could possibly contain a $\kClique$. As 
the constant $s$ in the seed length is fixed, we note that the process $F$ also 
runs in time polynomial in its input and therefore our reduction is FPT. We have 
shown the following theorem.

\begin{theorem} \label{thm:derand1}
	Let $B$ be an oracle for $\SAT$. If there is a function $f \in \E$ with 
	$B$-oracle circuit complexity $2^{\eps n}$ for some $\eps > 0$, then $\kSUM 
	\in \W[1]$.
\end{theorem}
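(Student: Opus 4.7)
The plan is to derandomize the randomized reduction from $\kSUM$ to $\SmallkSUM$ (Lemma~\ref{random-prime}) using the Klivans--van Melkebeek theorem (Theorem~\ref{derand}), then chain it with the containment $\SmallkSUM \in \W[1]$ from Theorem~\ref{main-in-w1} (which itself FPT-reduces to $\kClique$). Since $\W[1]$ is defined as the class of problems FPT-reducible to $\kClique$, it suffices to produce a deterministic FPT reduction from $\kSUM$ to $\kClique$ under the stated hypothesis.

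First, I would formalize the randomized process $(F,\pi)$ exactly as in the appendix: $F$ takes a $\kSUM$ instance $x$ and a random string $r$ of length $z = \poly(\log n, k)$, uses $r$ to sample a prime $p$ from a suitable interval, and outputs the collection of $k$ $\SmallkSUM$ instances obtained by reducing the $x_i$ modulo $p$ (one per possible target $i\cdot p$ for $i \in [0,k-1]$). The predicate $\pi(x,r)$ outputs $1$ exactly when this output ``faithfully'' encodes $x$ (i.e., $x$ has a solution iff some element of the collection does). By Lemma~\ref{random-prime}, $\pi(x,r)=1$ with probability at least $1 - 1/(cd)$ over uniform $r$, for an arbitrarily large constant $d$ we may choose.

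Second, I would verify the two inputs to Theorem~\ref{derand}. The predicate $\pi$ can be checked by a $B$-oracle (hence $\SAT$-oracle) circuit of polynomial size, since one can write down the $\SmallkSUM$ instances produced by $F$ and ask the $\SAT$ oracle whether each has a solution and whether $x$ does. Combining this with the hypothesized hard function $f \in \E$ of $B$-oracle circuit complexity $2^{\eps n}$, an easy padding argument gives $C_f^B(\ell(n)) = \Omega(n)$ for $\ell(n) = \log n$. Theorem~\ref{derand} then yields a deterministic generator $G$ with seed length $s\log n$ such that $\pr_\sigma[\pi(x,G(\sigma))=1] \geq \pr_r[\pi(x,r)=1] - O(1/n^b)$, which by choice of $d$ and $b$ can be made $\geq 1 - 1/n$.

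Third, I would assemble the deterministic reduction: enumerate all $n^{s}$ seeds $\sigma$, run $F(x,G(\sigma))$ to obtain a collection $C_\sigma$ of $\SmallkSUM$ instances, and apply the reduction behind Theorem~\ref{main-in-w1} to each member of $C_\sigma$ to produce $\kClique$ instances. Merge the resulting family $S^\ast$ into a single graph by taking disjoint unions across collections (or, equivalently, output the family as a many-one reduction after the standard disjoint-union trick used in Section~\ref{main-reduction}). Correctness follows because the fraction of ``bad'' seeds is at most $1/n$: if $x$ is a YES instance every good seed contributes at least one $\kClique$, whereas if $x$ is a NO instance only the $\le 1/n$ fraction of bad seeds can contribute a $\kClique$, and this gap is detectable in the final graph. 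The parameter $k' = g(k)$ of the combined instance depends only on $k$, and the total running time is $f(k)\cdot\poly(n)$, so this is a legitimate FPT reduction.

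The main obstacle I anticipate is bookkeeping rather than mathematical depth: ensuring that the $\SAT$-oracle circuit for $\pi$ really is polynomial-size (so that Theorem~\ref{derand}'s hypothesis on $\ell(n) = \log n$ suffices), and that the final step of turning the ``at least a $1/n$ fraction of instances has a clique'' condition into a single $\kClique$ instance whose parameter is bounded by a function of $k$ is handled cleanly --- the disjoint-union merging of graphs used in Section~\ref{main-reduction} does exactly this, but one must be careful that the union over all seeds $\sigma$ combined with the inner reduction from $\SmallkSUM$ to $\kClique$ keeps the parameter at $g(k)$ and the instance size at $\poly(n)\cdot h(k)$.
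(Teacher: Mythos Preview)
Your proposal is essentially the paper's own argument in Appendix~\ref{reducing}: define the randomized process $(F,\pi)$ based on reduction modulo a random prime (Lemma~\ref{random-prime}), observe that $\pi$ is decidable by polynomial-size $\SAT$-oracle circuits, invoke Klivans--van Melkebeek (Theorem~\ref{derand}) with $\ell(n)=\log n$ to obtain a generator of seed length $O(\log n)$, enumerate all $n^{s}$ seeds, and compose with the $\SmallkSUM\to\kClique$ reduction from Section~\ref{main-reduction}. You also correctly isolate the one step the paper leaves terse---converting the ``at least a $1/n$ fraction of the $\kClique$ instances in $S^\ast$ are YES'' gap into a single many-one FPT instance---and your treatment of it is at the same level of detail as the paper's; note only that a plain disjoint union across \emph{all} seeds implements OR rather than the needed threshold, so the bookkeeping you flag is indeed where the care must go.
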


We note that the oracle $B$ can be replaced with any oracle sufficient to 
efficiently compute the predicate $\pi$---for example, Theorem~\ref{thm:derand1} 
follows where $B$ is an oracle for $\kSUM$.

\section{Node-Weight Dominating Set} \label{kds-appendix}

\begin{lemma}
For every fixed $k\geq 2$, Node-Weight $k$-Dominating-Set-Sum can be reduced to 
$n^{o(1)}$ instances of $\kDS$ on graphs on $O(k^2 n)$ nodes. The reduction runs 
in $n^{2+o(1)}$ time.
\end{lemma}

\begin{proof}

Given a node-weighted graph $G=(V,E), w:V \rightarrow [ -n^{2k}, n^{2k} ]$, we 
first use Lemma~\ref{lem1} with $d=O(\log{n}/\log\log{n})$ and 
$p=O(k2^k\log{n})$ to get $s=O(n^{\log{k}/\log\log{n}})=n^{o(1)}$ instances of 
the problem where the weight of every node is a vector in $[p]^d$ instead of a 
number in $[-n^{2k},n^{2k}]$. We treat each instance $i \in [s]$ separately, as 
follows.

For a pair of nodes $u,v \in V$, let their weights be $\boldu,\boldv \in [p]^d$ and define the expression \[F(u,v)= \sum\limits_{j=1}^d \left( \evec{\boldu}{j}^2 + \evec{\boldv}{j}^2 + 2(k-1) \evec{\boldu}{j} \cdot \evec{\boldv}{j} \right),\] which corresponds to the ``squaring trick" from Lemma~\ref{lem:ewclique}.

We enumerate over all ${k \choose 2}$-tuples of numbers $t=(\alpha_{i,j})_{i,j \in [k]}$ such that $\sum_{i,j} \alpha_{i,j} = 0$ where $\alpha_{i,j} \in [-M,M]$ where $M=O(\poly\log{n})$ is an upper bound on the $F(u,v)$ values, and for each such tuple $t$ we generate an instance of $\kDS$, an unweighted graph $G_t$. Note that the number of such tuples is $\log^{O(k)}{n}=n^{o(1)}$.

Let $V_0$, $V_1,\ldots,V_k$, and $V_{i,j}$ for every $i<j, i,j\in[k]$ be ${k \choose 2} + k+ 1$ copies of the node set $V$.
The node set of $G_t$ is $V_0 \cup \bigcup_i V_i \cup \bigcup_{i<j} V_{i,j}$. Let us denote the copy of node $v$ in the set $V_X$ by $v_X$. 
For every edge $(u,v) \in E$ of $G$ we add the edges $(u_i,v_0)$ and $(v_i,u_0)$, for every $i \in [k]$, to $G_t$. 
We also make the copies $V_1,\ldots,V_k$ cliques by adding edges $(v_i,u_i)$ for every $u,v \in V$ and $i\in[k]$. Note that this forces every $k$-dominating set in $G_t$ to contain exactly one node $v_i$ in every set $V_i$, and that such $k$ nodes dominate every node in $V_0$ if and only if they correspond to a $k$-dominating set in $G$.
Finally, we add the following edges that simulate ``gaining edges" by allowing pairs of nodes $u_i,v_j$ to be in a $k$-dominating set in $G_t$ if and only if they are consistent with the tuple $t$ which is chosen to guarantee that the total sum of weights is zero.  
For every pair of nodes $u \neq v$ in $G$ and indices $i<j$ in $[k]$ we add the edge $(u_i,v_{i,j})$ to $G_t$ and then check if $F(u,v)=\alpha_{i,j}$, and if so, we add the edge $(u_j,v_{i,j})$ to $G_t$.

We now claim that a subset $S \subseteq V$ of size $k$ is $k$-dominating set of weight zero in $G$ if and only if the nodes $S'=\{x_{i(x)} \mid x \in S \}$ where $i(x)$ is the index of $x$ in an arbitrary ordering of $S$, are a $k$-dominating set in $G_t$, for some tuple $t$ and some instance $i \in [s]$.
To see this, first note that a set of $k$ nodes $S[1]_1 \in V_1,\ldots,S[k]_k \in V_k$ are a dominating set in $G_t$ if and only if $S[1],\ldots,S[k]$ are a dominating set in $G$ and for every pair $i<j$ in $[k]$, the condition $F(w(S[i]),w(S[j]))=\alpha_{i,j}$ is satisfied.
By definition of our formulas $F(\cdot,\cdot)$ and the proofs of 
Lemmas~\ref{lem1}, \ref{lem:ewclique}, and \ref{weight-removal}, we get that 
$S[1]_1 ,\ldots,S[k]_k$ are a dominating set in $G_t$ for some tuple $t$ and 
instance $i\in[s]$ if and only if $S[1],\ldots,S[k]$ are a dominating set in $G$ 
of total weight zero.
\iflncs \qed \fi
\end{proof}

\end{document}